 \newcommand{\remove}[1]{}
\mathchardef\myhyphen="2D
\newcommand{\bI}{{\mathsf I}}
\newcommand{\bH}{{\mathsf H}}
\newcommand{\bPr}{\mathsf{Pr}}
\newcommand{\cR}{{\cal R}}
\newcommand{\cC}{{\cal C}}
\newcommand{\cO}{{\cal O}}
\newcommand{\cM}{{\cal M}}
\newcommand{\cW}{{\cal W}}
\newcommand{\cS}{{\cal S}}
\newcommand{\bF}{{\mathbb F}}
\newcommand{\SD}{{\bf SD}}
\newcommand{\awtppd}{$\mathsf{AWTP_{PD}}\; $}
\newcommand{\awtp}{$\mathsf{AWTP}\; $}
\newcommand{\pd}{$\mathsf{PD}\;$}
\newcommand{\ext}{\mathsf{Ext}}
\newcommand{\cA}{{\cal A}}
\newcommand{\sfPD}{\mathsf{PD}}
\newcommand{\sfSMT}{\mathsf{SMT}}
\newcommand{\rc}{$\mathsf{RC}_m$}
\newcommand{\rw}{{$(\rho_r, \rho_w)$}}
\newcommand{\ed}{{$(\epsilon, \delta)$}}
\newcommand{\rorw}{{ (\rho_r, \rho_w) }}
\newcommand{\smtone}{{$(\epsilon, \delta)$-SMT$^{[ow-s]}$-PD }}
\newcommand{\smtonefam}{{$\epsilon$-SMT$^{[ow-s]}$-PD }}
\begin{document}

\title{Adversarial Wiretap Channel with Public Discussion}

\author{Pengwei Wang and Reihaneh Safavi-Naini}

\maketitle

\begin{center}
\end{center}

\begin{abstract}
Wyner's elegant model of  wiretap channel exploits noise in the communication  channel to provide perfect secrecy against a computationally unlimited eavesdropper without requiring a shared key.
We consider an adversarial model of wiretap channel proposed in   \cite{PS13,PS14}  where the  adversary is active: it selects a fraction 
$\rho_r$  of  the transmitted codeword to eavesdrop and   a   fraction $\rho_w$ of the codeword  to corrupt 
by ``adding" adversarial error. It was shown that this model  also  captures  network 
adversaries in the setting of 1-round Secure Message Transmission \cite{DDWY93}.  It was proved that  
secure communication (1-round) is possible if and only if   $\rho_r + \rho_w <1$.
   
In this  paper  we show that by allowing communicants  to have access to a public discussion channel (authentic communication without secrecy) secure communication becomes possible even if  $\rho_r + \rho_w >1$. We  formalize the model of \awtppd protocol and for two  efficiency measures, {\em  information rate } and  {\em message round complexity}   derive   tight  bounds.  
We also construct  a rate optimal protocol family 
with minimum number of message rounds. 
We show application of  these results to Secure Message Transmission with Public Discussion (SMT-PD), and in particular show a new lower bound on transmission rate of these protocols together with a new construction of 
an optimal SMT-PD protocol.  
\end{abstract}


\section{Introduction}

 In Wyner's  \cite{W75}  model of secure communication and its generalization to broadcast scenario \cite{CK78}, Alice is connected to Bob and Eve  through two noisy channels, referred to
 as the {\em main channel} and the {\em eavesdropper channel}, respectively.  The goal is to send a message from Alice to Bob with perfect secrecy and reliability.
 Wyner's pioneering work showed that communication with (asymptotic) perfect secrecy and reliability is possible if the eavesdropper's channel is noisier than the main channel. Importantly, security  is information theoretic and  does not require a pre-shared secret key.  
 \remove{
 Wiretap model has attracted much attention in recent years\cite{PTDC11,CDS12,BTV12,LGP08,BB11,BBRS08} because it provides a natural model for an eavesdropping adversary in wireless communication setting {\em where an unauthorized receiver is located within the reception range of a transmitter.}
 Extensions to the wiretap model have also been widely studied
 \cite{M92,AC93,OW84,LH78,CDS12}.   Maurer \cite{M92} considered a scenario where Alice and Bob can use
 a {\em public discussion (PD) channel} in addition to the wiretap channel. A public discussion channel is an  authenticated communication channel that  can be used by the sender and the receiver, and is readable by everyone including the
 adversary. 
Maurer proved that  
in this model, secure communication is  possible
 even if the noise in the eavesdropper's channel is  lower than the main channel, thus showing
the power of the PD channel as a resource for communicants.
 }
 Adversarial model of wiretap channel where the adversary is active,  dates back to 
 Ozarow and Wyner \cite{OW84}. In their model  instead of the noise corrupting the  adversary's view of the transmissed codewprd, the adversary can select a fraction of the codeword that it would like to ``see".
More recently, wiretap channels where the active adevrsary also corrupts the communication 
have been considered 
 \cite{ALCP09,BS13,MBL09,PS13}.
In these models the adversary  can select  its view (also, observation or eavedropping) of the communication and 
  is also able to
{\em partially  jam}  the channel by injecting noise in the 
main channel.  
\remove{The models in \cite{BS13} and \cite{MBL09}  use \emph{arbitrarily varying  channels} to model the adversary.

In the arbitrarily  varying channel model, the error distributions of the main channel and the adversarial channel, depend on the channel states. The information that the adversary  observes, depends on the  state of eavesdropping channel, and the adversary can actively change the state of main channel to  disturb the transmission between the sender and receiver.
}
In this paper we consider a 
model of adversarial wiretap channel (\awtp channel) that is 
 proposed in  \cite{PS13,PS14}. 
 In this model,  the adversary adaptively chooses a fraction $\rho_r$  of the coordinates of the sent codeword for eavesdropping,
and a fraction $\rho_w$ of the codeword  to corrupt by adding an
 adversarial noise to the channel. 
The  adversary's eavesdropings  and corruptions are adaptive:   for each action the adversary uses all its observations
 and corruptions up to that point, to make its next choice.   
 The goal of the adversary is to break the security and/or reliability of  communication. Codes that provide security and reliability for these channels are
 called {\em \awtp-codes}. 
Interestingly  \awtp  model is closely related  to Secure Message Transmission (SMT) problem \cite{DDWY93} in networks where 
 Alice and Bob are connected by $N$ node disjoint  paths, a subset  of which is   controlled by a
 computationally unlimited  adversary  
and the goal is to provide secrecy and reliability for the communication.
The adversary in  \awtp channel is more general (powerful) than the widely studied threshold SMT adversary and can choose different subsets for eavesdropping and
corruption. 

\remove{
As shown in  \cite{PS13}, 
\awtp channel captures an adversarial communication setting that had
been considered for networks, referred to secure message transmission  \cite{DDWY93}.
}

\subsubsection*{Motivation}

It was proved  \cite{PS13} that 
 perfect secrecy 
 and reliability for  \awtp in 1-round communication is possible  if and only if,
$\rho_r+\rho_w<1$.  
We consider a scenario where in addition to the \awtp~channel, a public discussion channel denoted by   $\mathsf{PD}$, is
available to the communicants.
We call this model {\em \awtp 
with public discussion} (or \awtppd for short).
Our goal is to  see if the 
use of this extra resource  can make  secure communication possible when
 $\rho_r+\rho_w>1$ (for example $\rho_r=\rho_w = 0.9$).

Public discussion channels had  been considered in
wiretap  and SMT models, both.
In wiretap setting  
 it was shown  \cite{M92,AC93} that a public discussion channel  substantially expands
the range of scenarios in which secure communication is possible.  In particular  
secure communication becomes possible even if the eavesdroper channel is less noisy than the
main channel.
A similar result holds for SMT.
Access to a public discussion channel in SMT was considered
by Garay \emph{et.al.} \cite{GGO10} who  showed that 
 secure message  tranmission
will be possible when   $N\geq t+1$  while without  a \pd,   $N\geq 2t+1$.
\remove{
It has been shown that secure SMT protocol is possible 
if the number of corrupted lines $t$ is limited to $N\geq 2t+1$ where $N$ is the total number of paths, 
and for 1-round perfect secrecy we must have $N\geq 3t+1$.
    }

We allow communicants to interact over the \pd
but assume {\em communication over the \awtp channel is one-way} and  from Alice to Bob.
This restriction is to simplify our analysis and as we will show, will still allow us to construct protocols that are optimal.
The assumption is  also natural 
in settings where   the sender node is  more powerful such as a base station.
\remove{
and can
analysis of the protocol is primarily motivated by 
This is mainly to  We consider the one-way communication of AWTP channel  due to analyze easily and leave two-way communication over AWTP channel as future work. The one-way communication model is also applicable to practical communication, such as television and radio broadcast system.
We  assume communication over $\mathsf{PD}$ is two-way.}

{\em Our results are self-contained and 
\cite{PS13,PS14} are used motivate the study 
of the \awtp~model with  $\mathsf{PD}$. 
}

\subsection{Our work}

\subsubsection{Model and Definitions}
We define a multi-round {\em message transmission protocol} over $\mathsf{AWTP_{PD}}$.
\remove{
 and formalize  its security and reliability 
in line with the corresponding definitions of \awtp codes \cite{PS13}. 
}
The protocol may leak information to the adversray and the decoder  may output 
 an incorrect message.
We define secrecy as  the statistical distance between  the adversary's view of any two adversarially chosen messages, and reliability as
the probability that  the decoded message being different from the sent one, for any message.

An \awtppd protocol in general, has multiple {\em message rounds} where in each message round a {\em protocol message} is sent by  Alice  
over \awtp channel or the \pd channel, 
or by Bob 
over the \pd channel, each message possibly of different length.
 In each  invocation of the \awtp channel the adversary can choose a different read and write set.
An $(\epsilon, \delta)$-\awtppd protocol guarantees that the leaked information  about the message is bounded by $\epsilon$, and the probability of decoding an incorrect message is bounded by $\delta$. 
The information  {\em rate $\mathsf R$} of  a \awtppd protocol  measures transmission efficiency of
the protocol in terms of  transmission over  the \awtp channel and  is the number of message (information) bits transmitted by the protocol, divided by the total number of transmitted bits over this 
channel.    The secrecy capacity $\mathsf{C}^\epsilon$ 
of  an \awtppd channel is the maximum   information rate that can be achieved  
by a \awtppd protocol family as the total number of bits  communicated over the \awtp channel goes to infinity when the security loss is bounded by $\epsilon$.

\subsubsection{Bounds}
We derive a tight upper bound on  $\mathsf{R}$: 
we first derive a bound on $\bH(M)$,
and then use the bound to prove that the highest  secrecy rate of an \ed-\awtppd protocol
is bounded by $\mathsf{C}^{\epsilon}\leq 1-\rho+2\epsilon\cdot (1+\log_{|\Sigma|}\frac{1}{\epsilon})+2\epsilon n$, where $n$ is the total (bit) length of transmission over the \pd channel,  $\Sigma$ is the alphabet of the AWTP channel,
and $\rho = \frac{1}{N}  |S_r\cup S_w|$ is the 
fraction of components of a codeword  that  are read or written to, by the adversary. For perfect secrecy capacity we have $\mathsf{C}^{0}\leq 1-\rho$. 
When  $S_r\cap S_w\neq \emptyset$, we have $\rho< \rho_r+\rho_w$, 
 and perfectly secure communication {\em is } possible even if $\rho_r +\rho_w >1$ (e.g. $\rho_r =\rho_w =0.9$), as long as
 $\rho<1$. 

A second efficiency measure 
is the {\rm message round complexity} \rc ~of the protocol. 
We derive a  tight lower bound on  \rc  ~
for any \awtppd protocol (one-way communication over \awtp) with positive 
rate, when $\rho_r+\rho_w>1$. 
 We show that a secure  \awtppd protocol  with $\rho_r+\rho_w>1$ and $\rho<1$, cannot have two message rounds and so \rc $\geq 3$.  
 \remove{
 {\color{red} construct a protocol with this message  round complexity.}

first  showing that  
no \awtppd protocol  with $\rho_r+\rho_w>1$, and $\rho<1$,  
can have $\mathsf{PD}$ be used only by Bob, or only by Alice.
It follows that,  the \awtppd protocol must have at least three rounds:  one round of  transmission over \awtp channel, one round of transmission over \pd channel by Bob, and one round of \pd channel by Alice.
}

\subsubsection{Construction of \awtppd protocol}
We construct a family of three message round  $(0,\delta)$-\awtppd protocols
for which the rate  can be made arbitrarily close to the
upper bound. 
That is, for any small $\xi>0$, there is  $N_0$, such that  for all $N>N_0$, the rate of the \awtppd protocol  family satisfies, $\mathsf{R}\geq 1-\rho-\xi$   and so the family achieves the capacity. 
The number of message rounds of the protocol is minimal and meets the lower bound on~\rc. 
The construction   is as follows: in the first message round Alice sends to Bob over  the \awtp channel a random sequence  over $\Sigma$.
In the second message round, Bob  randomly chooses elements of 
a 
 universal hash family to   calculate
the hash values of each of the received elements,  and sends the hash values together with the
randomness used when choosing the hash function, to Alice over the \pd channel.
In the third message round,  Alice, encrypts the message using a key that is extracted from the
random values that are correctly received by Bob and sends it over the \pd channel to Bob, together with sufficient information that allows Bob to calculate the same key and recover the message.

\subsection{Relation with SMT-PD}

In \emph{secure message transmission with public discussion channel (SMT-PD)} \cite{GGO10}, in additions to wires, 
\remove{ Alice and Bob are connected by $N$ node disjoint communication paths in a network, a subset  of which can be   controlled by a
 computationally unlimited  adversary,  
 and also an authenticated  public discussion channel that can be read by everyone. The adversary chooses a subset 
 of  wires  and corrupts them arbitrarily.}
 communicants have  access to a \pd.
 Efficieny of  SMT-PD protocols is in terms of {\em transmission rate} (number transmitted bits over wires for each message bit).

 Previous works on \awtp showed  correspondence between a  1-round {\em symmetric} SMT protocol and a \awtp code.
 A symmetric SMT protocol requires the set of transcripts on all wires to be the same.  All known threshold SMT protocols are symmetric. In the rest of this paper we use the term SMT to refer to  symmetric SMT protocols.
 In Section \ref{sec_smtawtp} we  define  \smtone,  a subset of SMT-PD protocols in which only Alice can send protocol messages over the 
 wires but \pd can be used in both ways.
The bounds and the construction of \awtppd
result in a  lower bound  on the transmission rate,  a lower bound on the  message round complexity, and  a new construction for  \smtone. 
In Section \ref{sec_smtawtp} we compare these results with the known  bounds and constructions of SMT-PD.
The  message round lower bound for \smtone    also lower bounds  the message round complexity  of general SMT-PD (two-way communication over wires) and
so can be compared with the  round complexity bounds in \cite{GGO10,SJST09}. Similarly the construction of \smtone can be compared with 
those in \cite{GGO10}.
A detailed comparison of the constructions  is given in Table 1. 
Compared to other  SMT-PD protocols   that achieve  
 the upper bound on the   information
 rate of an  \smtonefam  family when the number of wires ($N$) grows while  the fraction of eavesdropped
  and corrupted wires are given by  the constants $\rho_r$ and $\rho_w$ respectively, and the leakage is  bounded by $\epsilon$, 
the unique property  of our construction is that {\em the adversary's eavesdropping and corruption sets can be different.}

\remove{t is also the only construction whose information rate achieves the upper bound on the asymptotic  information
 rate of an SMT protocol family when the number of wires ($N$) grows, while  the fraction of eavesdropped
  and corrupted wires are given by  the constants $\rho_r$ and $\rho_w$, respectively.

We consider {\em information rate} of  {\color{red} an SMT protocol  which is the inverse of the transmission rate (ratio of number of message bits
to the number of bits sent over the wires),  as a new efficiency measure for these protocols and use it to define 
 the information capacity of an SMT-PD family when connectivity ($N$)) grows while  the fraction of eavesdropped
  and corrupted wires remain  constant. 
  
We show that 
our  \smtone protocol  is the only SMT-PD protocol that achieves this capacity.} 

}

\subsection{Related Work}
Maurer's  \cite{M92} introduced  \pd channels  first in the context of 
\emph{key agreement} over wiretap channels; this was also independently considered in \cite{AC93}. 
Since  the  \pd channel is considered  free, the established key can be used to send the message securely over this
channel and so
 the communication cost of the message transmission will stay the same as 
 that of the key establishment.
 \remove{
In a key agreement protocol,  the goal of Alice and Bob is to generate a common random key.

The scheme is information-theoretic security. Key agreement protocol in \cite{M92} is over the binary symmetic channel and \cite{AC93} is over discrete memoryless channel. The adversary is passive and only evasdrop the transmission, which is same as wiretap channel with passive adversary. 

In wiretap channel with public discussion, once a shared key is established, 
secure message transmission can be achieved by encrypting the message and sending it over the \pd
channel.  
Since the \pd channel can be freely used,
 the communication cost of the message transmission will stay the same as 
 that of the key establishment. 
}
  Our construction   also has two steps: a key establishment, 
followed by encrypting the message and sending it  over the public discussion channel.
This is also the approach in \cite{GGO10} (Protocol I) and \cite{SJST09}.

The model of adversarial wiretap in \cite{SW13,SW131} extends  wiretap II to include active  (jamming) adversarial noise. 

 \remove{
A binary version of the model  considered in this paper was also considered in  \cite{ALCP09}.

Other models of adversarial wiretap, \cite{ALCP09,BS13,MBL09,BBS13}, and 
 their relationship to the model considered here, have been discussed in \cite{PS13}.
 The paper also establishes a relation between a  special subset of adversarial wiretap
codes and  1-round  {\em Secure Message Transmission (SMT)} \cite{DDWY93} protocols for networks. 
 In SMT,  Alice and Bob are connected by $N$ node disjoint communication paths in a network, a subset of which is controlled by an adversary who can see and arbitrarily change
the transmissions over the  controlled paths. SMT protocols provide security and reliability against this adversary.
}

SMT-PD was introduced in \cite{GO08} as a building block in almost-everywhere  secure multiparty computation.
 Bounds on the required number of rounds were derived in \cite{SJST09}. In \cite{GGO10} a bound on transmission rate over wires (not including communication over the~\pd) was derived.
 The paper presents two constructions: protocol I is optimal  in the sense that the  transmission rate is of {\em the order of the bound as the number of wire increases},
 and protocol II  in which the goal is to minimize communication over the \pd. This reduction is however 
 at the expense of lower rate on the wires. I
 Table \ref{state100} compares the information rate of these constructions for large $N$.

\subsection{Organization} 
In Section 2, we introduce \awtp channel and the \pd channel, and in Section 3, define \awtppd protocols. In Section 4, we derive
 the upper bound on the rate, and the minimum requirement on the message round complexity. 
 In Section 5, we give the  construction of an optimal \awtppd protocol. In Section 6, we give the relation between \awtppd protocol and SMT-PD protocol. 
 In Section 7, we discuss our results, open problems and future works.

\section{Preliminaries}

We use, calligraphic letters $\cal X$ to denote sets, $\bPr(X)$ to denote
a probability distribution on the set $\cal X$, and $X$ to denote a random
variable that takes values from $\cal X$ with probability $\bPr(X)$. The
conditional probability of $X$ given  $E$,  is $\bPr[X = x|E]$.
$\log()$ is logarithm in base two. Shannon entropy of a random
variable $X$ is, $\bH(X ) =  \sum_x \bPr(x) \log \bPr(x)$, and conditional
entropy of a variable $X$ given $Y$, is 
$\bH(X|Y ) =  \sum_{x,y} \bPr(x, y) \log \bPr(x|y)$. The min-entropy of  a variable $X$ is $\bH_{\infty}(X)=\min_{x\in {\cal X}}-\log \bPr(X=x)$. Statistical distance between two random variables $X_1, X_2$, defined over $\cal X$,
is given by $\SD(X_1,X_2) = \frac{1}{2}\sum_x |\bPr(X_1 = x) - \bPr(X_2 = x)|$. 
Mutual information between random variables $X$ and $Y$ is given by, $\bI(X,Y) = \bH(X)-\bH(X|Y)$.
 Hamming weight of a vector $e$ is denoted by be $wt(e)$.

\subsection{Channel Models}

We consider two types of channels: \awtp channel and \pd channel.
\remove{ In \awtp~channel, the adversary can read a fraction of transmitted information and add error on another fraction of transmitted information. In public discussion channel, the adversary can read all the transmitted information, but the transmission over public discussion channel is error free. 
}
A channel can be one-way or two-way. 
\begin{definition}
A 
{\em one-way channel} from Alice to Bob (Bob to Alice) is used to send messages 
 from Alice to Bob  (Bob to Alice). 
A 
{\em two-way channel}  can be used 
in both directions,
from Alice to Bob, or from Bob to Alice. 
\end{definition}

Let $[N]=\{1,\cdots, N\}$,   $S_r= \{i_1,\cdots, i_{\rho_rN}\} \subseteq [N]$ and  $S_w= \{j_1,\cdots, j_{\rho_wN}\} \subseteq [N]$.
Support of a vector $x= (x_1\cdots x_N) \in \Sigma^N$, denoted by $\mathsf{SUPP}(x)$, is the set of positions where  $x_i\neq 0$.

\begin{definition}A $(\rho_r, \rho_w)$-Adversarial Wiretap Channel ($(\rho_r, \rho_w)$-\awtp Channel)\label{def_1awtp} 
is an adversarial channel 
that it 
 is (partially) controlled by an adversary Eve, with two capabilities: Reading and Writing.
For a codeword of length $N$, Eve  selects a subset $S^r \subseteq [N]$ of size $|S^r|=\rho_rN$ to read (eavesdrop),
and selects a subset $S^w \subseteq [N]$ of size $|S^w|=\rho_wN$ to {\em write to} (corrupt). The writing is
by adding to $c$ an error vector $e$ with $\mathsf{SUPP}(e) = S^w$, resulting in $c+e$ to be received. 
The adversary is adaptive and to select a component  for reading and/or writing, 
it uses its knowledge of the codeword at the time.
The subset $S=S^r\cup S^w$ of size $|S|=\rho N$, is the set of components of the codeword that the adversary reads or writes to. 
\end{definition}
 The \awtp channel is called a \emph{restricted}-\awtp channel if $S_r=S_w=S$.

{\em We assume the adversarial wiretap channel is one-way and can only be used by Alice.}

\begin{definition}\label{def_pd}(Public Discussion Channel (\pd Channel)) is an 
authenticated channel between Alice and Bob, that can be 
read by everyone including Eve. 

\end{definition}

{\em We assume the \pd channel is two-way  
 can be used by Alice and Bob, both.} 

Hence in our {\em   \awtppd setting  Alice and Bob have access to a one-way \awtp channel and 
a two-way \pd channel.}
We consider protocols with multiple message rounds  and assume in each message round a message is
 sent on one of the channels available to the communicants. In particular, in each message round Alice can use 
either the \awtp or the \pd channel. 

\begin{definition}
The message round complexity $\mathsf{RC}_m$ of a protocol is the total  number invocations of channels (\awtp and \pd)
by the two 
the communicants.
\end{definition}


\section{$\mathsf{AWTP_{PD}}$ Protocol} \label{sec_awtppd}

Alice (sender) wants to send a message (information) $m\in \cM$, securely and reliably to Bob (receiver),
using a multi-round protocol over a \awtppd~channel, called an
{\em  \awtppd protocol.}

The protocol consists of a sequence of message rounds. 
Each  message round is in one of the following form: (i) Alice sends a message to Bob over \awtp channel,  (ii) Alice sends a message to Bob over \pd channel, and (iii)
Bob sends a message to Alice over the \pd channel.  
 
Let $\ell_c$ and $\ell_d$ denote the  total number of invocations of the \awtp channel, and the \pd channel, respectively, 
and  assume  
$\ell=\ell_c+\ell_d$. 
Let $r_A$ and $r_B$ denote the randomness used by Alice and Bob.

The {\em  protocol messages (also called codewords)} 
sent over the \awtp~channel and the
\pd channel are denoted by 
 $c_i$ and $d_i$, respectively.

We use $c^{i}=\{c_1 \cdots c_i\}$  to denote the concatenation of  protocol messages,
 transmitted over the \awtp  channel after the $i^{th}$ invocation of the  \awtp channel.
 Similarly  $d^{i}=\{d_1 \cdots d_i\}$ is  the concatenation of  protocol messages
sent over \pd,  after the $i^{th}$ invocation of this channel.

Let the protocol message alphabets of the \awtp  and \pd channels  be  $\Sigma$ and 
$\bF_2$, respectively. 
In  the $i^{th}$ invocation of the \awtp channel, Alice sends  a codeword of length $ N_i$. 
In the  $i^{th}$  invocation of the \pd channel, Alice or Bob, sends a binary message of length $n_i$.  The number of symbols sent over the  \awtp channel is $N=\sum_{i=1}^{\ell_c} N_i $, and the number of bits transmitted over the \pd, is $n=\sum_{i=1}^{\ell_d} n_i$.

Let the view of Alice and Bob when sending the $i^{th}$ codeword 
be,  $v^{i}_A$ and $v^{i}_B$, respectively.
The view of a participant consists of all the protocol messages that are received before sending the $i^{th}$ codeword.
When sending a message $m$,  in the $i^{th}$ invocation of the  \awtp channel, Alice constructs a codeword 
$c_i$ using her view, local randomness, and $m$, 
\[
c_i=\mathsf{AWTP_{PD}}(m, r_A, i, v^{i}_A, \mathsf{AWTP}).
\] 

In each invocation of the \pd channel, Alice (or Bob) generates the codeword $d_i$ using their view, local randomness and $m$, 
\[
d_i=\mathsf{AWTP_{PD}}(m, r_{X}, i, v^{i}_{X}, \mathsf{PD}),
\]
where $X\in \{A,B\}$ if the protocol message constructed by Alice (Bob).


\begin{definition}
[$(\epsilon, \delta)$-$\mathsf{AWTP_{PD}}$ protocol]\label{def_smt}
A secure $(\epsilon, \delta)$-$\mathsf{AWTP_{PD}}$  protocol 
satisfies the following two properties:
\begin{enumerate}
\item Secrecy:  For any two messages $m_1, m_2\in \cM$, 
the statistical distance between Eve's views of the protocol,
 when the same random coins $r_E$ are used by Eve, is bounded by $\epsilon$. 
\[
\begin{split}
\max_{m_0, m_1}{\bf SD}(&\mathsf{View_{E}}(\mathsf{AWTP_{PD}}(m_1), r_E),\mathsf{View_{E}}(\mathsf{AWTP_{PD}}(m_2), r_E))\leq \epsilon
\end{split}
\]
\item Reliability:  For any message $M_\cS$ chosen by Alice, the probability that Bob outputs the  message  sent by Alice, is at least $1-\delta$. That is,
\[
\bPr(M_\cR\neq M_\cS)\leq \delta.
\]
Here probability is over 
the randomness of Alice and Bob and the adversary. 

\end{enumerate}
\end{definition}
The \awtppd protocol provides {\em perfect secrecy} if $\epsilon=0$. If adversary is passive, then Bob can always output the correct message $m_{\cal S}$ and $\bPr(M_\cR=M_\cS)=1$. 
A {\em  restricted-$(\epsilon,\delta)$-\awtppd } protocol 
 is over a  restricted-\awtppd channel where 
 $N_i= N_j$, $S_i=S_j=S$ for any $1\leq i\leq j\leq \ell$. 

The efficiency measures of an $(\epsilon, \delta)$-$\mathsf{AWTP_{PD}}$  protocol $\Pi$ are, 
{\em  (i) the information rate $\mathsf{R}(\Pi)= \frac{\log|\cal M|}{N\log |\Sigma|}$ and, (ii) the 
message round complexity $\mathsf{RC}(\Pi)= (r_{\mathsf{awtp}}, r_{\mathsf{pd}})$ denoting the number of invocations of the \awtp and \pd channels, respectively .
}

\begin{definition}\label{def_smtfamily}
An $(\epsilon, \delta)$-\awtppd protocol family
 for
 a $(\rho_r, \rho_w)$-AWTP channel,  
is a  family  of protocols ${\bf \Pi}=\{\Pi^N\}_{N\in \mathbb{N}}$, where 
$\Pi^N= (\epsilon, \delta)$-\awtppd  is an \awtppd  protocol for the $(\rho_r, \rho_w)$-AWTP channel. 
A protocol family $\bf \Pi$ achieves {\em information  rate $\mathsf{R}$,} 
if for any $\xi>0$ there exist $N_0$ such that for any $N\geq N_0$, there is $\delta<\xi$ and,
\[
\frac{\log|\cal M|}{N\log |\Sigma|}\geq \mathsf{R}-\xi.
\]
\end{definition}
The {\em $\epsilon$-secrecy (perfect secrecy) capacity $\mathsf{C}^\epsilon$ ($\mathsf{C}^0$)} of
a \rw-\awtppd channel is the largest achievable rate of all $(\epsilon, \delta)$-$\mathsf{AWTP_{PD}}$ ($(0,\delta)$-\awtppd) protocol families
for the channel.

Note that we effectively assume communication over \pd is free and consider communication cost of the \awtp only.


\section{Bounds on \ed-$\mathsf{AWTP_{PD}}$ Protocols}\label{sec_upperbound}

We derive two bounds for  $(\epsilon, \delta)$-$\mathsf{AWTP_{PD}}$ protocols: 
an  upper bound on the rate, and 
 a lower bound on the 
minimum number of message rounds required for such protocols.



\subsection{Upper Bound on Rate}\label{sec_ratebound}
 

\begin{theorem}\label{the_smt_bound1}
The 
rate of an $(\epsilon, \delta)$-$\mathsf{AWTP_{PD}}$ protocol  is bounded by, 
\[
{\mathsf C}^{\epsilon}\leq 1-\rho+2\epsilon\cdot(1+\log_{|\Sigma|}\frac{1}{\epsilon})+ 2\epsilon n
\]
\end{theorem}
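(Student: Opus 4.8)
The plan is to bound $\bH(M)$ for the worst case where the message $M$ is uniform over $\cM$ (so that $\bH(M)=\log|\cM|$ and the rate equals $\bH(M)/(N\log|\Sigma|)$), then divide by $N\log|\Sigma|$ and let $\delta\to 0$ along the family. Write $C\in\Sigma^N$ for the concatenation of all codewords sent over the \awtp channel, $D$ for the $n$-bit transcript of the \pd channel, $Y=C+E$ for what Bob receives over the \awtp channel, and $V_E=(C_{S_r},D)$ for Eve's view. The structural fact driving the bound is that the coordinate set $\bar S=[N]\setminus(S_r\cup S_w)$, of size $(1-\rho)N$, consists of the positions that are simultaneously never read by Eve and never corrupted, so they reach Bob intact while leaking nothing to Eve.

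First I would use reliability to localize the message to the uncorrupted coordinates. Consider the admissible adversary that reads $S_r$ honestly but overwrites the coordinates in $S_w$ with fresh independent uniform noise. Since the $(\epsilon,\delta)$ guarantee must hold against every adversary, Bob still decodes $M$ with probability at least $1-\delta$; averaging over the injected noise then shows that $M$ is recoverable, up to error $\delta$, from $(C_{\overline{S_w}},D)$ alone, where $\overline{S_w}=[N]\setminus S_w$. By Fano this gives $\bH(M\mid C_{\overline{S_w}},D)\le H_b(\delta)+\delta\log|\cM|$. Making this step rigorous in the adaptive, multi-round setting is the main obstacle: the \pd transcript and Alice's later codewords can depend on the adversary's writes, so the ``decoder ignores $S_w$'' claim has to be argued through the protocol's own reliability guarantee against the overwriting adversary, rather than by directly manipulating a fixed decoding function.

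Next I would invoke secrecy to discard the coordinates that Eve reads. Splitting $\overline{S_w}=(S_r\setminus S_w)\cup\bar S$ and using the chain rule, $\bI(M;C_{\overline{S_w}},D)=\bI(M;C_{S_r\setminus S_w},D)+\bI(M;C_{\bar S}\mid C_{S_r\setminus S_w},D)$. Because $(C_{S_r\setminus S_w},D)$ is a function of $V_E=(C_{S_r},D)$, data processing bounds the first term by $\bI(M;V_E)$, while the second term is at most $\bH(C_{\bar S})\le(1-\rho)N\log|\Sigma|$. Combining with the Fano estimate yields
\[
\bH(M)=\bI(M;C_{\overline{S_w}},D)+\bH(M\mid C_{\overline{S_w}},D)\le \bI(M;V_E)+(1-\rho)N\log|\Sigma|+H_b(\delta)+\delta\log|\cM|,
\]
which already isolates the capacity term $1-\rho$ and explains why the written-but-unread coordinates $S_w\setminus S_r$ contribute nothing, i.e. why the bound features $\rho=|S_r\cup S_w|/N$ rather than $\rho_r$.

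Finally I would convert the statistical-distance form of secrecy into the residual $\bI(M;V_E)$ term. By convexity and the triangle inequality, the hypothesis $\SD\le\epsilon$ between the views of any two messages forces each $\bPr(V_E\mid M=m)$ to lie within $\epsilon$ of the marginal $\bPr(V_E)$; feeding this into the standard entropy-versus-statistical-distance inequality bounds $\bI(M;V_E)$ by a term in which $\log|\mathcal{V}_E|$ appears, and since Eve's view contains the $n$-bit \pd transcript this produces the $2\epsilon n$ summand together with the $2\epsilon(1+\log_{|\Sigma|}\frac1\epsilon)$ summand after normalization. Dividing the displayed inequality through by $N\log|\Sigma|$ and letting $\delta\to 0$ gives ${\mathsf C}^{\epsilon}\le 1-\rho+2\epsilon(1+\log_{|\Sigma|}\frac1\epsilon)+2\epsilon n$, with $\epsilon=0$ recovering ${\mathsf C}^0\le 1-\rho$. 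Pinning the exact constants in this last conversion is the only genuinely calculational part.
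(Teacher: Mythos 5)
Your proposal is correct and follows essentially the same route as the paper's proof: the same weak adversary (read/write sets fixed in advance, uniform noise injected on $S_w$), Fano's inequality showing the message is recoverable from the unwritten coordinates plus the \pd transcript (the paper's Lemma \ref{le_smt_bound2}), a Pinsker-type conversion of the statistical-distance secrecy into a bound on $\bI(M;V_E)$ (the paper's Lemma \ref{le_smt_bound3}), and the counting bound $\bH(C_{\bar S})\leq (1-\rho)N\log|\Sigma|$ on the untouched coordinates. Your chain-rule bookkeeping is a mild rearrangement of the paper's (you split $\bI(M;C_{\overline{S_w}},D)$ where the paper manipulates $\bH(M|C^{\ell_c,r}D^{\ell_d})$), and your ``let $\delta\to 0$'' step is formalized in the paper by the Step-3 contradiction argument with the definition of an achievable rate, but the substance is identical.
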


In the following proof we assume $\rho_r+\rho_w=1$, and $|S_i^r \cup S_i^w | = \rho N< N$ for $i=1,\cdots, \ell_c$. The proof can be extended to  $\rho_r+\rho_w> 1$  and $|S_i^r \cup S_i^w | = \rho N< N$ also. 
The proof outline is as follows. 
We define an adversary $\mathsf{Adv}_1$ and prove an upper bound on the rate of any  protocol over the \awtppd channel assuming this adversary.
This gives un upper bound on the rate of the 
$\mathsf{AWTP_{PD}}$ protocol  against any general adversary.
\remove{
 cannot  violate this upper bound, because $\mathsf{Adv}_1$
is one of the possible adversarial strategies that can be used against a protocol.
}

The proof has three steps.

First (Step1), we define a weak adversary that  before the start of the protocol chooses, (i)    the reading and writing sets of all invocations of the \awtp~channel, and (ii) the random errors of appropriate weight for  each \awtp channel invocation. 
For this adversary, we prove two lemmas (Lemmas \ref{le_smt_bound1} and \ref{le_smt_bound2}) related to the entropy of the transmitted message. 
 Second (Step 2), we use the lemmas to derive a bound on $\frac{\log |\cM|}{N\log |\Sigma|}$. 
Finally (Step 3) we prove the bound on the channel capacity.

\vspace{1mm}
{\em Notations.}  
Let the codeword length in the $i^{th}$ invocation of the AWTP channel  be $N_i$, 
and  $[N]=\bigcup_{i=1}^{\ell_c} [N_i]$. 
Let $S^{r}_i$ and $S^w_i$ denote the read and write sets of the adversary in the 
$i^{th}$ invocation of the AWTP channel with $|S_i^r|=\rho_rN_i$ and $|S_i^w|=\rho_wN_i$, and denote
$S^{i,r}=\{S_1^{r}, \cdots, S_i^{r}\}$ and $S^{i,w}=\{S_1^{w}, \cdots, S_i^{w}\}$.
 
 Let $S_i^{a}=S_i^{r}\backslash S_i^{w}$ be the set of read only, $S_i^{b}=S_i^{r}\cap S_i^{w}$  the set of read and write, $S_i^{c}=S_i^{w}\backslash S_i^{r}$   the set of write only, and $S_i^{d}=[N_i]\backslash (S_i^{r}\cup S_i^{w})$  the set of neither read nor write components, in the  $i^{th}$ invocation of the AWTP channel.
Finally,
$S^{{\ell_c}, a}=\cup_{i=1}^{\ell_c} S_i^{a}$, 
$S^{{\ell_c}, b}=\cup_{i=1}^{\ell_c} S_i^{b}$, $S^{{\ell_c}, c}=\cup_{i=1}^{\ell_c} S_i^{c}$,
and 
$S^{{\ell_c}, d}=\cup_{i=1}^{\ell_c} S_i^{d}$.

Let $c_i$ and $d_i$ be the codewords transmitted over the \awtp channel and \pd channel in the $i^{th}$ invocations of the two channels, respectively; $c_{i,j}$ and $d_{i,j}$ denote the $j^{th}$ components of codeword $c_i$ and $d_i$, respectively;  $c^i$ and $d^i$ denote 
concatenations of all codewords sent in all invocations up to, and including, the $i^{th}$ invocations of the 
\awtp and the \pd channels, respectively. We use capital letters to refer to the random variables associated with,
$c_i$,  $d_i$, $c_{i,j}$ ,  $d_{i,j}, c^i$ and $d^i$, as $C_i, D_i, C_{i, j}, D_{i, j}, C^{i}$ and $D^{i}$, respectively.
Let $C^{\ell_c, r}$ and $C^{\ell_c, w}$ be the random variables of the  protocol messages 
 on the sets $S^{\ell_c,r}$ and $S^{\ell_c,w}$, and
$C^{\ell_c, a}$, $C^{\ell_c, b}$, $C^{\ell_c, c}$, $C^{{\ell_c}, d}$ be the random variables corresponding to the sets,
 $S^{\ell_c, a}, S^{\ell_c, b}, S^{\ell_c, c}, S^{\ell_c, d}$, respectively.

\medskip
\begin{proof}
The proof 
has three steps:

\vspace{2mm}
\noindent
{\bf Step 1}.

We define an adversary
$\mathsf{Adv}_1$ that works as follows: 
\begin{enumerate}
\item 

Selects the reading and writing  sets   $S^{{\ell_c}, r}$ and $S^{{\ell_c}, w}$,  
 of  all  \awtp channel invocations, before the start of the
protocol. 
 
\item For each invocation, chooses a random error vector $e_i$ of appropriate weight; that is,  chooses  $e^w_i$, with uniform distribution from $\Sigma^{|S_i^w|}$; we have $\bPr(e^w_i)=\frac{1}{|\Sigma|^{\rho_wN_i}}$.
\item During the protocol execution, uses the error vectors  to corrupt the \awtp messages, 
reads the transmission on $S^{{\ell_c}, r}$ and over \pd channel.
\end{enumerate}

We give two lemmas that follow from  
$\epsilon$-secrecy and $\delta$-reliability of  the $(\epsilon, \delta)$-$\mathsf{AWTP_{PD}}$ protocol 
 against $\mathsf{Adv}_1$.
 Let $V_E$ denote the random variable of the adversary view at the end of the protocol.

\begin{lemma}\label{le_smt_bound3}
For an $(\epsilon,\delta)$-\awtppd protocol, the following holds:
\[
\bI(M; V_E)\leq 2\epsilon N\cdot \log(\frac{|\Sigma|}{\epsilon})+2\epsilon n
\]
\end{lemma}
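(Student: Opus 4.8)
The plan is to convert the $\epsilon$-secrecy guarantee, which is phrased as a bound on the statistical distance between Eve's views for any \emph{pair} of messages, into a bound on the mutual information $\bI(M; V_E) = \bH(V_E) - \bH(V_E \mid M)$. First I would observe that secrecy controls the distance of each conditional view from the average view. Writing $P_{V_E \mid m}$ for the distribution of $V_E$ given $M = m$ and $P_{V_E} = \sum_m \bPr(m)\, P_{V_E \mid m}$ for the marginal, convexity of statistical distance gives $\SD(P_{V_E \mid m}, P_{V_E}) \le \max_{m'} \SD(P_{V_E \mid m}, P_{V_E \mid m'}) \le \epsilon$ for every $m$. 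Since the $\epsilon$-secrecy property is stated for a fixed adversarial coin string $r_E$, I would first fix $r_E$, derive the above closeness for each fixed $r_E$, and then average; because $r_E$ is chosen independently of $M$ this step is clean.

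Second, I would expand $\bI(M; V_E) = \sum_m \bPr(m)\,[\bH(V_E) - \bH(V_E \mid M = m)]$ and bound each summand by the continuity of Shannon entropy (a Fannes--Audenaert--type inequality): if two distributions over a domain of size $d$ are within statistical distance $\epsilon$, their entropies differ by at most $\epsilon \log d + h(\epsilon)$, which for small $\epsilon$ is at most $2\epsilon \log(d/\epsilon)$. Applying this with $P = P_{V_E}$ and $Q = P_{V_E \mid m}$, and noting the per-$m$ bound is independent of $m$, yields $\bI(M; V_E) \le 2\epsilon \log(d/\epsilon)$, where $d$ is the size of the support of $V_E$.

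Third, I would bound $d$. For the adversary $\mathsf{Adv}_1$ with fixed coins, the view consists of the $\rho_r N$ read symbols over $\Sigma$ together with the $n$ bits received over the $\mathsf{PD}$ channel (the error vectors being determined by $r_E$), so $d \le |\Sigma|^{\rho_r N}\, 2^n \le |\Sigma|^{N}\, 2^n$ using $\rho_r \le 1$. Hence $\log(d/\epsilon) \le N \log|\Sigma| + n + \log(1/\epsilon)$, and substituting while loosening $2\epsilon \log(1/\epsilon) \le 2\epsilon N \log(1/\epsilon)$ gives $\bI(M; V_E) \le 2\epsilon N \log|\Sigma| + 2\epsilon N \log(1/\epsilon) + 2\epsilon n = 2\epsilon N \log(|\Sigma|/\epsilon) + 2\epsilon n$, as claimed.

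\emph{Main obstacle.} Two points deserve care. The first is the continuity-of-entropy step: statistical closeness does \emph{not} bound KL divergence, so I cannot write $\bI(M; V_E)$ as an average KL divergence and bound it termwise by $\epsilon$. This is precisely why I route through the difference of Shannon entropies and invoke the finite-support Fannes inequality, and it is here that the explicit dependence on the alphabet sizes $|\Sigma|$ and $2$ (hence the $N\log|\Sigma|$ and $n$ contributions) enters. The second is the bookkeeping around the adversary's randomness $r_E$: since secrecy is only guaranteed for a fixed $r_E$, I would condition on $r_E$ throughout, use independence of $r_E$ from $M$ to argue $\bI(M; V_E) \le \bI(M; V_E \mid R_E)$, and then apply the entropy-continuity argument uniformly over the conditioned distributions.
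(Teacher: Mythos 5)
Your proposal is correct and follows essentially the same route as the paper's proof: both convert the pairwise secrecy guarantee into $\SD(V_E, V_E\mid M=m)\leq \epsilon$ by convexity, apply a continuity-of-entropy bound (which the paper states as a ``Pinsker Lemma'' with constant $2\epsilon\log(|{\cal V}_E|/\epsilon)$, i.e.\ your Fannes-type step), bound the view's support by $|\Sigma|^{N}\cdot 2^{n}$, and average over $m$ to get $\bI(M;V_E)$. Your explicit conditioning on $r_E$ and the remark that statistical distance does not control KL divergence are careful refinements of points the paper leaves implicit, but they do not change the argument.
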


Proof is in Appendix \ref{ap_smt_bound3}.

Since $\mathsf{Adv}_1$ selects the reading sets $S^{{\ell_c}, r}$ before the start of the 
 protocol, we have, 
$V_E=\{C^{{\ell_c}, r},D^{{\ell_d}}\}$, and so, 
we have 
\begin{eqnarray}\label{le_smt_bound1}
\bI(M;C^{{\ell_c},r}D^{{\ell_d}})\leq 2\epsilon N\cdot\log(\frac{|\Sigma|}{\epsilon})+2\epsilon n
\end{eqnarray}

\begin{lemma}\label{le_smt_bound2}
For an $(\epsilon, \delta)$-$\mathsf{AWTP_{PD}}$ protocol, the following holds assuming
  $\mathsf{Adv}_1$ adversary,
\[
\bH(M|C^{{\ell_c},a}C^{{\ell_c},d}D^{{\ell_d}})\leq \bH(\delta)+\delta\log |\cM|
\]
\end{lemma}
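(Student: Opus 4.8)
The plan is to combine Fano's inequality with the fact that, under $\mathsf{Adv}_1$, the symbols Bob receives on the written positions are pure uniform noise and therefore reveal nothing about the message beyond what is already carried by the public discussion. Write $S^{\ell_c,w}=\cup_i S_i^w$ for the union of all write sets, let $E^{\ell_c,w}$ denote the uniform error symbols $\mathsf{Adv}_1$ injects there, and set $Y^{\ell_c,w}=C^{\ell_c,w}+E^{\ell_c,w}$ for Bob's received values on those positions. Since $\mathsf{Adv}_1$ leaves the complement $S^{\ell_c,a}\cup S^{\ell_c,d}$ untouched, Bob's entire received word over the AWTP channel is exactly $(C^{\ell_c,a}C^{\ell_c,d},\,Y^{\ell_c,w})$, so his full view is $\mathsf{View}_B=(C^{\ell_c,a}C^{\ell_c,d},Y^{\ell_c,w},D^{\ell_d},R_B)$, where $R_B$ is the random variable for Bob's coins. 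By the $\delta$-reliability of the protocol against $\mathsf{Adv}_1$, Bob's output $\dec(\mathsf{View}_B)$ differs from $M$ with probability at most $\delta$.

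First I would apply Fano's inequality to Bob's decoder, which is a function of $\mathsf{View}_B$: since $\bPr(\dec(\mathsf{View}_B)\neq M)\le\delta$,
\[
\bH(M\mid C^{\ell_c,a}C^{\ell_c,d}\,Y^{\ell_c,w}\,D^{\ell_d}\,R_B)\le \bH(\delta)+\delta\log|\cM| .
\]
Next I would use the chain rule for mutual information to relate this to the target quantity:
\[
\bH(M\mid C^{\ell_c,a}C^{\ell_c,d}D^{\ell_d})=\bH(M\mid C^{\ell_c,a}C^{\ell_c,d}Y^{\ell_c,w}D^{\ell_d}R_B)+\bI(M;Y^{\ell_c,w}R_B\mid C^{\ell_c,a}C^{\ell_c,d}D^{\ell_d}) .
\]
Thus the lemma follows once I show the last mutual-information term is zero, i.e. that $(Y^{\ell_c,w},R_B)$ is independent of $M$ given $(C^{\ell_c,a}C^{\ell_c,d},D^{\ell_d})$.

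The hard part is exactly this conditional independence, because the protocol is adaptive: a later codeword $c_i$ (hence part of $C^{\ell_c,a}C^{\ell_c,d}$) can depend on Bob's earlier public-discussion messages, which in turn depend on $Y^{\ell_c,w}$, so $C^{\ell_c,a}C^{\ell_c,d}$ and $Y^{\ell_c,w}$ are genuinely entangled. I would prove the claim by induction on the message rounds, working from the mutually independent basic variables $M,R_A,R_B,E^{\ell_c,w}$. Two structural facts drive the argument: (i) each error block $E_i^w$ is fresh and uniform, so conditioned on the whole past (and on $M$) each received block $Y_i^w=C_i^w+E_i^w$ is uniform and independent of $M$; and (ii) $Y^{\ell_c,w}$ and $R_B$ influence the rest of the transcript \emph{only} through Bob's public-discussion messages, which are already part of $D^{\ell_d}$, while Alice never observes $Y^{\ell_c,w}$ and so her codewords and messages depend on it only through the public discussion she has received.

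Consequently, once we condition on $D^{\ell_d}$ together with the revealed honest positions $C^{\ell_c,a}C^{\ell_c,d}$, the residual randomness in $(Y^{\ell_c,w},R_B)$ is decoupled from $M$; a direct factorization of the conditional law $\bPr(Y^{\ell_c,w},R_B\mid C^{\ell_c,a}C^{\ell_c,d},D^{\ell_d},M)$, along the lines of the one-round case, should show it does not depend on $M$, giving $\bI(M;Y^{\ell_c,w}R_B\mid C^{\ell_c,a}C^{\ell_c,d}D^{\ell_d})=0$. Substituting this vanishing term back into the chain-rule identity and invoking the Fano bound yields $\bH(M\mid C^{\ell_c,a}C^{\ell_c,d}D^{\ell_d})\le\bH(\delta)+\delta\log|\cM|$, as required. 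The main obstacle I anticipate is making the induction airtight, since the adaptive interleaving of AWTP and public-discussion rounds must be handled so that the "screening off'' by $D^{\ell_d}$ is invoked at each step; the freshness and uniformity of the adversary's errors is the single property that keeps the conditional distribution of $Y^{\ell_c,w}$ independent of $M$ throughout.
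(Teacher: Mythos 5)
Your proposal is correct and follows essentially the same route as the paper's own proof: Fano's inequality applied to Bob's received view, the chain-rule identity $\bH(M|C^{{\ell_c},a}C^{{\ell_c},d}D^{{\ell_d}}) = \bH(M|C^{{\ell_c},a}Y^{{\ell_c},w}C^{{\ell_c},d}D^{{\ell_d}}) + \bI(M;Y^{{\ell_c},w}|C^{{\ell_c},a}C^{{\ell_c},d}D^{{\ell_d}})$, and the vanishing of the conditional mutual information because $\mathsf{Adv}_1$'s errors are uniform and independent of everything else. The only difference is in the last step, and it is minor: the paper first bounds $\bI(M;Y^{{\ell_c},w}|\cdot)$ by $\bI(C^{{\ell_c},w};Y^{{\ell_c},w}|\cdot)$ via a Markov-chain inequality and then kills the latter with the one-time-pad property, whereas you argue the conditional independence of $(Y^{{\ell_c},w},R_B)$ from $M$ directly by a factorization of the conditional law; your claimed independence is true, rests on exactly the same uniform-fresh-error property, and the paper's handling of the adaptivity issue you flag is in fact no more detailed than your sketch.
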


Proof is in Appendix \ref{ap_smt_bound2}.

Lemma \ref{le_smt_bound1} and Lemma \ref{le_smt_bound2} are used to prove an
upper bound on the rate of an $(\epsilon, \delta)$-$\mathsf{AWTP_{PD}}$ protocol, assuming  adversary $\mathsf{Adv}_1$.

\vspace{2mm}
\noindent
{\bf Step 2.}
We prove the  upper bound,
\[
\frac{\log |\cM|}{N\log |\Sigma|}\leq 1-\rho+2\epsilon\cdot(1+\log_{|\Sigma|}\frac{1}{\epsilon})+ 2\epsilon n +  2\bH(\delta) +\delta n
\]
Here, $N$ is the total number of symbols sent  over \awtp channel, and $n$ is the number of bits sent over the \pd channel. 
Let $\cC^{{\ell_c}}$ and ${\cal D}^{{\ell_d}}$ denote the set of possible protocol messages 
over the \awtp channel and the \pd channel,
respectively. 
%
We have,
\begin{equation}\label{eq_bound13}
\bH(M)=\bI(M; C^{{\ell_c}, r} D^{{\ell_d}})+\bH(M|C^{{\ell_c},r} D^{{\ell_d}})
\end{equation}
From Lemma \ref{le_smt_bound1},  the first term can be upper bound as, 
\begin{equation}\label{eq_bound12}
\bI(M; C^{{\ell_c},r}D^{{\ell_d}})\leq 2\epsilon\cdot N\log(\frac{|\Sigma|}{\epsilon})+2\epsilon n
\end{equation}
The upper bound on the second item $\bH(M|C^{{\ell_c},r}D^{{\ell_d}})$ is, 
\begin{equation}\label{eq_bound15}
\begin{split}
&\bH(M|C^{{\ell_c},r}D^{{\ell_d}})\\
&=\bH(M|C^{{\ell_c},a}C^{{\ell_c},b}D^{{\ell_d}})\\
&=\bH(MC^{{\ell_c},b}|C^{{\ell_c},a}D^{{\ell_d}})-\bH(C^{{\ell_c},b}|C^{{\ell_c},a}D^{{\ell_d}})\\
&=\bH(M|C^{{\ell_c},a}D^{\ell_d})+\bH(C^{\ell_c,b}|MC^{{\ell_c},a}D^{{\ell_d}})-\bH(C^{{\ell_c},b}|C^{{\ell_c},a}D^{\ell_d})\\
&=\bH(M C^{{\ell_c},d}|C^{{\ell_c},a} D^{\ell_d})-\bH(C^{{\ell_c},d}|M C^{{\ell_c},a} D^{{\ell_d}})+\bH(C^{{\ell_c},b}|M C^{{\ell_c},a} D^{\ell_d})-\bH(C^{{\ell_c},b}|C^{{\ell_c},a} D^{{\ell_d}})\\
&=\bH(M|C^{\ell_c,a} C^{{\ell_c},d} D^{\ell_d})+\bH(C^{\ell_c,d}|C^{{\ell_c},a} D^{{\ell_d}})-\bH(C^{{\ell_c},d}|M C^{{\ell_c}, a} D^{{\ell_d}})+\bH(C^{\ell_c,b}|M C^{{\ell_c},a} D^{{\ell_d}})\\
&\;\;\;\;-\bH(C^{{\ell_c},b}|C^{{\ell_c},a} D^{{\ell_d}})\\
&\overset{(1)}{\leq} \bH(M|C^{{\ell_c},a} C^{{\ell_c},d} D^{{\ell_d}})+\bH(C^{{\ell_c},d}|C^{{\ell_c},a} D^{{\ell_d}})-\bH(C^{\ell_c,d}|M C^{{\ell_c},a} D^{{\ell_d}})\\
&\overset{(2)}{\leq} \bH(M|C^{{\ell_c},a} C^{{\ell_c},d} D^{{\ell_d}})+\bH(C^{{\ell_c},d})
\end{split}
\end{equation}
Inequality (1) is from,\\ $\bH(C^{{\ell_c},b}|M C^{{\ell_c},a} D^{{\ell_d}})\leq \bH(C^{{\ell_c},b}|C^{{\ell_c},a} D^{{\ell_d}})$.
Inequality (2) follows from,
 $\bH(C^{{\ell_c},d}|C^{{\ell_c},a} D^{{\ell_d}})\leq \bH(C^{{\ell_c}, d})$ and $\bH(C^{{\ell_c},d}|M C^{{\ell_c},a} D^{{\ell_d}})\geq 0$.

From $\bH(C^{{\ell_c},d}) \leq \log |\cC^{{\ell_c},d}| \leq N(1-\rho)\log |\Sigma|$, we have,
\begin{equation}\label{eq_bound18}
\bH(C^{{\ell_c},d})\leq N(1-\rho)\log |\Sigma|
\end{equation}
Using Lemma \ref{le_smt_bound2}, we have, 
\begin{equation}\label{eq_bound17}
\bH(M|C^{{\ell_c},a} C^{{\ell_c},d} D^{{\ell_d}})\leq \delta\log |\cM|+\bH(\delta)
\end{equation} 
From (\ref{eq_bound15}), (\ref{eq_bound18}), (\ref{eq_bound17}), we have,
\begin{equation}\label{eq_bound11}
\bH(M|C^{{\ell_c},r} D^{{\ell_d}})\leq N(1-\rho)\log |\Sigma| + \delta\log |\cM| + \bH(\delta)
\end{equation} 
We also have,  
\begin{equation}\label{eq_bound14}
\log |{\cal M}| \stackrel{(1)}\leq 
 \log | \cC^{{\ell_c}} {\cal D}^{{\ell_d} }| \stackrel{(2)}\leq N\log |\Sigma|+n
\end{equation}
where $\cC^{{\ell_c}}{\cal D}^{\ell_d} $  are possible (error free) transcripts of the protocol generated by the protocol encoders (at Alice and Bob), 
$(1)$ is because decoding without adversarial error recovers the message and so the number of possible encoding
 transcripts is   $\geq |{\cal M}| $,  and $(2)$ is because of the set of corrupted transcripts is larger than uncorrupted ones.

Using (\ref{eq_bound11}) and (\ref{eq_bound14}), we have,
\begin{equation}\label{eq_bound19}
\begin{split}
&\bH(M|C^{{\ell_c},r} D^{{\ell_d}})\leq N(1-\rho)\log |\Sigma|  + \delta(N\log |\Sigma|+n)+ \bH(\delta)\\
\end{split}
\end{equation}

Using (\ref{eq_bound13}), (\ref{eq_bound12}), and (\ref{eq_bound19}), gives the upper bound on $\bH(M)$,
\[
\begin{split}
\bH(M)&\leq N(1-\rho)\log |\Sigma|+2\epsilon\cdot N\log(\frac{|\Sigma|}{\epsilon})+2\epsilon n + \delta N\log |\Sigma|+\delta n +\bH(\delta) 
\end{split}
\]
The above inequality must hold for any distribution on $\cal M$, and in particular for a uniform distribution with
 $\bH(M)=\log |\cM|$. Using   $\delta\leq \bH(\delta)$ for $0\leq \delta\leq 1/2$, we have,  
\[
\frac{\log |\cM|}{N\log |\Sigma|} \leq 1-\rho+2\epsilon\cdot (1+\log_{|\Sigma|}\frac{1}{\epsilon})+ 2\epsilon n +  2\bH(\delta) +\delta n
\]

\vspace{2mm}
\noindent
{\bf Step 3.} We show that $\epsilon$-secrecy capacity of a $\rorw$-\awtppd is bounded  by,
\[
{\mathsf C}^{\epsilon} \leq 1-\rho+2\epsilon\cdot (1+\log_{|\Sigma|}\frac{1}{\epsilon})+ 2\epsilon n
\] 
Proof is by contradiction.

Let ${\mathsf C}^{\epsilon} = 1-\rho+2\epsilon\cdot (1+\log_{|\Sigma|}\frac{1}{\epsilon})+ 2\epsilon n+\hat{\xi}$,
 for some small constant $\hat{\xi}>0$. From   Definition \ref{def_smtfamily}, for any $0<\hat{\xi}'\leq \min(\frac{\hat{\xi}}{5n}, \bH^{-1}(\frac{\hat{\xi}}{5}))$, there is $N_0$, such that for any $N>N_0$,  we have  $\delta<\hat{\xi}'$ and,
\[
\begin{split}
\frac{\log |\cM|}{N\log |\Sigma|}&\geq {\mathsf C}^{\epsilon} -\hat{\xi}'\\
&= 1-\rho+2\epsilon\cdot (1+\log_{|\Sigma|}\frac{1}{\epsilon})+ 2\epsilon n+2\bH(\delta) +\delta n+\hat{\xi} -\hat{\xi}'-2\bH(\delta) -\delta n \\
&\geq 1-\rho+2\epsilon\cdot (1+\log_{|\Sigma|}\frac{1}{\epsilon})+ 2\epsilon n+2\bH(\delta) +\delta n+\hat{\xi}'\\
&> \frac{\log |\cM|}{N\log |\Sigma|}
\end{split}
\]
This 
contradicts the bound on $\frac{\log |\cM|}{N\log |\Sigma|}$, and so,
\[
{\mathsf C}^{\epsilon} \leq 1-\rho+2\epsilon\cdot (1+\log_{|\Sigma|}\frac{1}{\epsilon})+ 2\epsilon n
\]
\qed
\end{proof}

\begin{corollary} 
The perfect secrecy capacity of a  $\rorw$-\awtppd channel is bounded as,
\[
{\mathsf C}^0 \leq 1-\rho
\]
\end{corollary}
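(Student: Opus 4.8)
The plan is to read off the corollary as the perfect-secrecy ($\epsilon=0$) instance of Theorem~\ref{the_smt_bound1}. One cannot naively substitute $\epsilon=0$ into the stated bound, because $2\epsilon\cdot(1+\log_{|\Sigma|}\frac{1}{\epsilon})$ is an indeterminate $0\cdot\infty$ expression; one could dispose of this by the limit $\epsilon\to 0^{+}$ using $\lim_{\epsilon\to 0^{+}}\epsilon\log\frac{1}{\epsilon}=0$, but the cleaner route I would take is to re-run the proof of Theorem~\ref{the_smt_bound1} under the exact perfect-secrecy hypothesis, where this term never arises.

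First I would note that $\epsilon=0$ means the statistical distance between Eve's views of any two messages is exactly $0$, so $M$ is independent of the adversary's view and $\bI(M;V_E)=0$ (equivalently, Lemma~\ref{le_smt_bound3} at $\epsilon=0$ combined with $\bI\ge 0$). Since $\mathsf{Adv}_1$ fixes its read sets in advance, $V_E=\{C^{\ell_c,r},D^{\ell_d}\}$ and thus $\bI(M;C^{\ell_c,r}D^{\ell_d})=0$. Lemma~\ref{le_smt_bound2} is purely a $\delta$-reliability statement and is unaffected by setting $\epsilon=0$, so it still applies. I would then follow Step~2 verbatim: in $\bH(M)=\bI(M;C^{\ell_c,r}D^{\ell_d})+\bH(M|C^{\ell_c,r}D^{\ell_d})$ the first summand vanishes, while the conditional term is bounded by~(\ref{eq_bound19})---an inequality that used only Lemma~\ref{le_smt_bound2} and the counting bounds~(\ref{eq_bound18}) and~(\ref{eq_bound14}), none involving $\epsilon$. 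Taking $M$ uniform and dividing by $N\log|\Sigma|$, with the same loose bounding of the reliability terms as in Step~2 (using $\delta\le\bH(\delta)$ and $N\log|\Sigma|\ge 1$), gives the per-$N$ bound
\[
\frac{\log|\cM|}{N\log|\Sigma|}\leq 1-\rho+2\bH(\delta)+\delta n.
\]

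Finally I would apply the capacity argument of Step~3 unchanged: assuming for contradiction that $\mathsf{C}^0=1-\rho+\hat\xi$ for some $\hat\xi>0$, Definition~\ref{def_smtfamily} furnishes arbitrarily large $N$ with $\delta$ so small that $2\bH(\delta)+\delta n<\hat\xi$, contradicting the displayed bound, whence $\mathsf{C}^0\leq 1-\rho$. The one place requiring care---and the only step that is not a pure specialization---is the residual term $\delta n$: because $n$, the number of \pd bits, may grow with $N$, I would, exactly as in Step~3, pick the slack $\hat\xi'\le\min(\frac{\hat\xi}{5n},\bH^{-1}(\frac{\hat\xi}{5}))$ so that both $\delta n$ and $2\bH(\delta)$ fall below $\hat\xi$. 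Everything else is inherited directly from Theorem~\ref{the_smt_bound1}.
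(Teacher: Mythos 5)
Your proposal is correct and matches the paper's intent: the paper states this corollary without a separate proof, treating it as the $\epsilon=0$ specialization of Theorem~\ref{the_smt_bound1}, and your argument is exactly that specialization---same adversary $\mathsf{Adv}_1$, same decomposition of $\bH(M)$, same use of Lemma~\ref{le_smt_bound2}, and the same Step~3 contradiction with the slack $\hat\xi'\le\min(\frac{\hat\xi}{5n},\bH^{-1}(\frac{\hat\xi}{5}))$. Your added care in replacing the indeterminate $2\epsilon(1+\log_{|\Sigma|}\frac{1}{\epsilon})$ term by the exact statement $\bI(M;V_E)=0$ under perfect secrecy is a sound (and slightly cleaner) way of making the specialization rigorous, but it is not a different route.
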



\subsection{Lower Bound on Message Round Complexity}
 
 An efficient construction of a $(0,\delta)$-\awtp code  (one message round) with rate $\mathsf{R}=1-\rho_r-\rho_w$ is given in \cite{PS14}, implying that secure transmission over \awtp channels with one message round protocols is possible if, $\rho_r+\rho_w<1$. In Section \ref{sec_ratebound}, we proved that  for \awtppd~channels, ${\mathsf C}^0 \leq 1-\rho$ and so secure communication 
with $\rho_r+\rho_w>1$ may be possible, as long as $\rho<1$.   

\begin{theorem}\label{the_rc}
Perfectly secure communication over \awtppd channel requires,\\
(i) one message round protocol, if $\rho_r+\rho_w<1$.\\
(ii) a  protocol with at least three message rounds, if $\rho_r+\rho_w\geq 1$. 
That is,
\[
\mathsf{RC}\begin{cases}
\geq 1 & \text{if  } \rho_r+\rho_w<1;\\
\geq 3 & \text{if  } \rho_r+\rho_w\geq 1.
\end{cases}
\]
\end{theorem}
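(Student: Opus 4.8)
The plan is to treat the two cases separately. Part (i) is immediate: the construction in \cite{PS14} gives a one-round $(0,\delta)$-\awtp code with rate $1-\rho_r-\rho_w > 0$ whenever $\rho_r+\rho_w<1$, and this code uses only a single invocation of the \awtp channel and no \pd channel, hence $\mathsf{RC}=1$ suffices. So the real content is part (ii), where I must show that when $\rho_r+\rho_w\geq 1$ (but $\rho<1$), no protocol with one or two message rounds can achieve perfect secrecy together with positive rate. The natural strategy is to rule out each short-round configuration by exhibiting an admissible adversary strategy that either breaks secrecy or forces the rate to zero.

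First I would enumerate the possible protocols with at most two message rounds, given the channel restrictions in our model (\awtp is one-way from Alice, \pd is two-way). A one-round protocol must be a single \awtp transmission from Alice; but by the perfect-secrecy capacity bound $\mathsf{C}^0\leq 1-\rho$ from Section \ref{sec_ratebound}, together with the fact that without \pd interaction the relevant fraction is governed by $\rho_r+\rho_w\geq 1$, such a protocol cannot carry information secretly — effectively this reduces to the known impossibility of 1-round \awtp codes when $\rho_r+\rho_w\geq 1$. For two-round protocols, the message rounds must be one of a few shapes: two \awtp sends by Alice; an \awtp send followed by a \pd message; a \pd message (from Bob) followed by an \awtp send; or two \pd messages. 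I would argue that any configuration in which \emph{only Alice speaks} (whether over \awtp or \pd) gives the adversary, who reads the \pd channel entirely and reads the fraction $\rho_r$ of each \awtp transmission, a view that determines the message up to the unread \awtp coordinates — and since Alice has received no feedback from Bob, she cannot adaptively concentrate secret information away from the read set, so the secrecy bound forces $\mathsf{R}=0$.

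The key case, and the one I expect to be the main obstacle, is a two-round protocol in which Bob \emph{does} speak once: a single \pd message from Bob followed by a single \awtp (or \pd) send by Alice, or vice versa. Here the idea is that a single message from Bob cannot both (a) be private from the adversary — it is sent over \pd, so the adversary reads it in full — and (b) allow Alice to agree with Bob on which \awtp coordinates were delivered uncorrupted. The crucial observation is that the adversary, writing to a set $S^w$ of fractional size $\rho_w$, can make Bob's received codeword consistent with two different intended messages: because $\rho_r+\rho_w\geq 1$, the read set and write set together cover enough coordinates that any one-shot reconciliation attempt (Bob announcing hash values or received symbols over the public channel) is either already known to the adversary or can be spoofed. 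I would formalize this by a hybrid/indistinguishability argument: exhibit two messages $m_1,m_2$ and two adversary write-strategies such that Bob's state and the entire public transcript are identically distributed, contradicting either the $\epsilon=0$ secrecy condition or the $\delta$-reliability condition. The heart of the difficulty is handling the \emph{adaptivity} of the adversary across the two rounds and verifying that for \emph{every} placement of Bob's single round, one of these two failures is forced; I would organize this as a case analysis on whether Bob's message precedes or follows the \awtp transmission, reducing each subcase to the impossibility already established for protocols where only Alice effectively contributes secret information.
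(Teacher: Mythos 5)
You have part (i) right (it is the same observation the paper makes), but in part (ii) your case analysis inverts where the difficulty lies, and the case that actually carries the theorem is left unproved. The configurations you single out as ``the key case'' --- those in which Bob speaks once --- are in fact the easy ones: since the \awtp channel is one-way from Alice and there are only two message rounds, Bob's single \pd message either is the last round (Alice never reacts to it, and Bob already knows his own message, so it adds nothing to his decoder's input) or precedes Alice's only \awtp transmission (so it is merely public randomness, readable by the adversary). In both situations Bob decodes from a single \awtp reception, and the one-round bound $\mathsf{R}\leq 1-\rho_r-\rho_w\leq 0$ applies verbatim; this is exactly how the paper disposes of its forms (3)--(5), with no hybrid or spoofing machinery needed. (Your enumeration also lists two-\pd protocols, which you never rule out; that case is trivial because the entire transcript is public.)

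The genuinely hard configurations are the ones you dismiss in two sentences: Alice \awtp followed by Alice \pd, and Alice \awtp followed by Alice \awtp (the paper's forms (1) and (2)). There Bob's decoder holds two receptions from Alice, so no reduction to a one-round \awtp code is available, and your stated reason --- that the adversary's view ``determines the message up to the unread AWTP coordinates'' and that Alice ``cannot adaptively concentrate secret information away from the read set'' --- is not a proof; indeed the first clause is not even the relevant tension, since the unread coordinates are precisely where the secret must live and secrecy alone is never violated. The paper's Lemma \ref{the_rbound1} closes this case with an idea absent from your sketch: a \emph{pair} of adversaries. Write $S^a=S^r\setminus S^w$, $S^b=S^r\cap S^w$, $S^c=S^w\setminus S^r$, $S^d=[N]\setminus(S^r\cup S^w)$. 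Against $\mathsf{Adv}_2$, which reads $S^a\cup S^b$ and adds \emph{uniform} noise on $S^b\cup S^c$, $\delta$-reliability and Fano's inequality give $\bH(M\mid C^{a}C^{d}D)\leq \bH(\delta)+\delta\log|\cM|$ (Lemma \ref{le_smt_bound2}). Against $\hat{\mathsf{Adv}}_2$, whose read set is $S^a\cup S^d$ --- an admissible read set precisely because $\rho_r+\rho_w\geq 1$ forces $|S^d|\leq|S^b|$ and hence $|S^a\cup S^d|\leq\rho_r N$ --- perfect secrecy gives $\bI(M;C^{a}C^{d}D)=0$. Combining the two yields $2\bH(\delta)\geq 1-\frac{1}{|\cM|}$, so any protocol of these forms with a nontrivial message space has decoding error bounded away from zero. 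This decoupling --- reliability tested against one adversary whose unwritten set is exactly the read set of a second adversary against whom secrecy is tested, which is possible exactly when $\rho_r+\rho_w\geq 1$ --- is the missing idea; without it, the Alice-only cases, and hence the theorem, remain open.
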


We use the same  notations as in Section \ref{sec_ratebound}.

\begin{proof}
We only need to prove (ii). The protocol must have at least two message rounds and so can have one of the following forms. Note that to achieve privacy, at least one message round of \awtp channel is needed.
\end{proof}
\begin{enumerate}
\item Rnd 1: Alice $\overset{\mathsf{AWTP}}{\longrightarrow}$ Bob; Rnd  2: Alice $\overset{\mathsf{PD}}{\longrightarrow}$ Bob.
\item Rnd  1: Alice $\overset{\mathsf{AWTP}}{\longrightarrow}$ Bob; Rnd  2: Alice $\overset{\mathsf{AWTP}}{\longrightarrow}$ Bob.\item Rnd  1: Alice $\overset{\mathsf{AWTP}}{\longrightarrow}$ Bob; Rnd  2: Bob $\overset{\mathsf{PD}}{\longrightarrow}$ Alice.
\item Rnd 1: Alice $\overset{\mathsf{PD}}{\longrightarrow}$ Bob; Rnd  2: Alice $\overset{\mathsf{AWTP}}{\longrightarrow}$ Bob.
\item Rnd  1: Bob $\overset{\mathsf{PD}}{\longrightarrow}$ Alice; Rnd  2: Alice $\overset{\mathsf{AWTP}}{\longrightarrow}$ Bob.
\end{enumerate}

The third, fourth and fifth forms are not possible: in all these cases Bob's decoder will have the vector received 
through a one round \awtp channel and so the protocol cannot have rate higher than $1-\rho_r-\rho_w$.
\remove{
 The second form is equivalent to one round \awtp channel. Bob replies to Alice over \pd channel after receiving transmission from \awtp channel, which will not help him to decode the received codeword; The fourth and fifth forms are also equivalent to one round \awtp channel, since \pd channel can be used only transmit public parameters. 
}

\begin{lemma}\label{the_rbound1}
In an $(0, \delta)$-$\mathsf{AWTP_{PD}}$ protocol  of the forms (1) or (2) above,
if $\rho_r + \rho_w \geq 1$, 
then,
$$2\bH(\delta)\geq 1-\frac{1}{|\cM|}$$
\end{lemma}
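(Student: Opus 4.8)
The plan is to reuse the adversary $\mathsf{Adv}_1$ from the proof of Theorem \ref{the_smt_bound1}, specialised so that its read and write sets cover every coordinate. In forms (1) and (2) all transmissions flow from Alice to Bob with no intervening message from Bob, so every codeword is a function of $(m,r_A)$ only and $\mathsf{Adv}_1$'s pre-committed (non-adaptive) choice of sets is a legitimate strategy. Because $\rho_r+\rho_w\geq 1$, in each invocation I can choose $S_i^r,S_i^w$ with $S_i^r\cup S_i^w=[N_i]$; hence $S^{\ell_c,d}=\emptyset$ and the ``neither read nor written'' variable $C^{\ell_c,d}$ is trivial. The whole argument then sandwiches the single quantity $\bH(M\mid C^{\ell_c,a}D^{\ell_d})$ between a secrecy-based lower bound and a reliability-based upper bound.

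For the lower bound I would invoke perfect secrecy. With $\epsilon=0$, the secrecy inequality (\ref{le_smt_bound1}) reads $\bI(M;C^{\ell_c,r}D^{\ell_d})\leq 0$, so $M$ is independent of $(C^{\ell_c,r},D^{\ell_d})$ and $\bH(M)=\bH(M\mid C^{\ell_c,r}D^{\ell_d})=\bH(M\mid C^{\ell_c,a}C^{\ell_c,b}D^{\ell_d})$. Dropping $C^{\ell_c,b}$ from the conditioning only increases entropy, giving $\bH(M)\leq \bH(M\mid C^{\ell_c,a}D^{\ell_d})$. For the upper bound I apply Lemma \ref{le_smt_bound2} to $\mathsf{Adv}_1$; since $C^{\ell_c,d}$ is trivial it specialises to $\bH(M\mid C^{\ell_c,a}D^{\ell_d})\leq \bH(\delta)+\delta\log|\cM|$.

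Combining the two bounds yields $\bH(M)\leq \bH(\delta)+\delta\log|\cM|$, which holds for every distribution on $\cM$; choosing $M$ uniform so that $\bH(M)=\log|\cM|$ and rearranging gives $(1-\delta)\log|\cM|\leq \bH(\delta)$. Finally, in the meaningful regime $\delta\leq 1/2$ one has $1-\delta\geq 1/2$, and $\log|\cM|\geq 1-1/|\cM|$ for every $|\cM|\geq 1$, so $\bH(\delta)\geq \tfrac12\bigl(1-1/|\cM|\bigr)$, i.e.\ $2\bH(\delta)\geq 1-1/|\cM|$, as claimed. (The intermediate inequality $(1-\delta)\log|\cM|\leq \bH(\delta)$ is in fact stronger and already forces $\delta\to 1$ as $|\cM|\to\infty$, reflecting that Bob obtains essentially no information about $m$.)

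The one place where the hypothesis $\rho_r+\rho_w\geq 1$ is genuinely used---and the step I expect to be most delicate---is the alignment of the two lemmas on the single variable $C^{\ell_c,a}$. Lemma \ref{le_smt_bound2} naturally conditions on the \emph{uncorrupted} coordinates $S^{\ell_c,a}\cup S^{\ell_c,d}$ (those Bob receives intact), whereas secrecy controls the \emph{read} coordinates $S^{\ell_c,r}=S^{\ell_c,a}\cup S^{\ell_c,b}$; the two overlap exactly on the read-only set $S^{\ell_c,a}$. Only when $S^{\ell_c,d}=\emptyset$---that is, when the read and write sets cover $[N]$, which requires $\rho_r+\rho_w\geq 1$---does every intact coordinate also lie in the read set, so that reliability and secrecy can be chained through $C^{\ell_c,a}$. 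If instead $\rho_r+\rho_w<1$, the set $S^{\ell_c,d}$ is nonempty and carries message information that is simultaneously hidden from Eve and delivered intact to Bob; this is precisely the slack that makes one-round transmission possible and that correctly breaks the argument.
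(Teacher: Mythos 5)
Your entropy manipulations are internally consistent, but the adversary you construct is not one the lemma is really about, and that is where the proof breaks. In the paper's model, $\rho = |S^r\cup S^w|/N$ is a parameter of the channel rather than a free choice of the attacker: the entire point of Theorem \ref{the_rc}(ii) (and of the construction in Section \ref{sec_construction}, whose secrecy in Lemma \ref{le_prosec} explicitly requires $\ell\leq (u-1)(1-\rho)N$) is the regime $\rho_r+\rho_w\geq 1$ \emph{and} $\rho<1$. Your adversary takes $S^r\cup S^w=[N]$, i.e., $\rho=1$; against such an adversary the capacity bound already gives ${\mathsf C}^0\leq 1-\rho=0$, so your argument establishes a statement that is vacuous for the theorem being proved (if $\rho=1$ were admissible, no number of message rounds would help and the round-complexity claim would have no content). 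In the meaningful regime every admissible adversary has $S^{\ell_c,d}\neq\emptyset$, and then --- exactly as you observe in your own last paragraph --- your chain through $C^{\ell_c,a}$ breaks. A tell-tale symptom: your argument never genuinely uses that the protocol has form (1) or (2). Your stated reason for needing the forms (legitimacy of pre-committed, non-adaptive sets) is not a real restriction, since a non-adaptive adversary is always a special case of an adaptive one; so if your argument were valid it would apply verbatim to the three-message-round protocol of Section \ref{sec_constr} and ``prove'' it insecure, contradicting Lemmas \ref{le_prosec} and \ref{le_prorel}.

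The paper avoids this by playing \emph{two} admissible adversaries, each with footprint $\rho N<N$, against the same protocol. Partition $[N]$ into $S^a\cup S^b\cup S^c\cup S^d$ with $|S^b|=|S^d|=(1-\rho)N$ (possible when $\rho_r+\rho_w=1$). Adversary $\mathsf{Adv}_2$ reads $S^a\cup S^b$ and writes $S^b\cup S^c$; Lemma \ref{le_smt_bound2} applied to it gives the reliability constraint $\bH(M|C^{a}C^{d}D)\leq \bH(\delta)+\delta\log|\cM|$. Adversary $\hat{\mathsf{Adv}}_2$ reads $S^a\cup S^d$ and writes $S^c\cup S^d$; perfect secrecy against \emph{it} gives $\bI(M;C^{a}C^{d}D)=0$ --- the same random variables that appear in the reliability constraint, but controlled by a different attack. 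Combining inequalities obtained from two different adversaries is legitimate only if $(M,C,D)$ has the same joint distribution in both executions, and this is precisely where forms (1)/(2) are used: with no feedback from Bob, the whole transcript is a function of $(M,r_A)$ alone and hence adversary-independent, whereas any protocol with feedback (such as the one in Section \ref{sec_constr}) makes $D$ depend on which errors were injected. This two-adversary decoupling is the idea missing from your proposal, and without it (or an equivalent device) the lemma cannot be established for $\rho<1$.
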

Proof is in Appendix \ref{ap_the_rbound1}.


\section{ An optimal  $(0,\delta)$-\awtppd Protocol}\label{sec_construction}

We first introduce the building blocks of the \awtppd protocol, and then describe the construction.
The rate of the protocol meets the upper bound.  The protocol  has three message rounds and so meets the minimum message round complexity.
 The construction is inspired  by Shi \emph{et al}. \cite{SJST09}.

\subsection{Universal Hash Family}\label{sec_mac}


An $(N,n,m)$-hash family is a set  $\cal F$ of $N$ functions, $f: {\cal X}\rightarrow {\cal T}$,  $f\in \cal F$, where $|{\cal X}|=n$ and $|{\cal T}|=m$. 
Without loss of generality, we assume $n\geq m$.
\begin{definition}\cite{S02}
Suppose that the $(N, n, m)$-hash family $\cal F$ has range $\cal T$  which is an additive Abelian group. 
$\cal F$ is called $\epsilon$-$\Delta$ universal, if  for any two elements $x_1, x_2\in {\cal X}, x_1\neq x_2,$, and for any element $t\in {\cal T}$, there are  at most $\epsilon N$ functions $f\in \cal F$ such that $f(x_1)-f(x_2)=t$, were the operation is from the group.
\end{definition}

We will use a classic construction of $\frac{u}{q}$-universal hash family \cite{S02}. Let $q$ be a prime and $u\leq q-1$. Let the message be ${\bf x}=\{x_1, \cdots, x_{u}\}$. For $\alpha \in \bF_q$, define the universal hash function $\mathsf{hash}_\alpha$ by the rule,
\begin{equation}\label{eq_mac1}
t=\mathsf{hash}_\alpha({\bf x})=x_1\alpha+x_2\alpha^2+\cdots+x_{u}\alpha^{u} \mod q 
\end{equation}
Then $\{\mathsf{hash}_\alpha(\cdot): \alpha\in \bF_q\}$ is a $\frac{u}{q}$-$\Delta$ universal $(q, q^u, q)$-hash family.

\subsection{Randomness Extractor}\label{sec_extractor}

A randomness extractor is a function, which is applied to a weakly random entropy source (i.e.,  a non-uniform random variable), to
obtain a uniformly distributed source.

\begin{definition}\cite{DORS08}
A (seeded) $(n, m, r, \delta)$-strong extractor is a function 
$
\mathsf{Ext}: q^{n}\times q^{d}\rightarrow q^{m}
$
such that for any source $X$ with $\bH_{\infty}(X)\geq r$, we have
\[
\SD((\mathsf{Ext}(X, \mathsf{Seed}), \mathsf{Seed}), (U, \mathsf{Seed}))\leq \delta 
\]
with the seed uniformly distributed over $\bF_q^d$.

A function $\ext: q^{n}\rightarrow q^{m}$ is a (seedless) $(n, m, r, \delta)$-extractor if for any source $X$ with $\bH_{\infty}(X)\geq r$, the distribution $\ext(X)$ satisfies  $\SD(\ext(X), U)\leq \delta$.
\end{definition}

\remove{
The average min-entropy and average-case strong extractor measures the case that the adversary has side information $Y$ about source $X$. Let the average min-entropy be $\bH_{\infty}(X|Y)=-\log \mathsf{E}_{y\leftarrow Y}(\max_x\bPr(X=x|Y=y))$.
\begin{definition}\cite{DORS08}
Let the source $\bH_{\infty}(X|Y)\geq r$. A (seedless) average-case $(n, m, r, \delta)$-strong extractor is a function,
\[
{\mathsf{Ext}}: q^{n}\rightarrow q^{m}
\]
such that $\SD(({\mathsf{Ext}}(X), Y), (U, Y))\leq \delta$.
\end{definition}
}

A seedless extractor  can be constructed from Reed-Solomon (RS) codes \cite{CDS12}.
The construction works only for a  restricted class of  sources, known as {\em symbol-fixing sources}.

\begin{definition} 
An $(n, m)$ symbol-fixing source is 
a tuple of independent random variables ${\bf X} = (X_1, \cdots, X_n)$, defined over a set $\Omega$, 
such that $m$ of the variables take values uniformly and independently from $\Omega$,
and the rest have fixed values.

\end{definition}

We show a construction of a seedless $(n, m, m\log q, 0)$-extractor from RS-codes. Let $q\geq n+m$. Consider an $(n,m)$  symbol-fixing source
 ${\bf X}=(X_1, \cdots, X_n)\in \bF^n_q$ with $\bH_\infty(X)\geq m\log q$. The extraction has two steps:
\begin{enumerate}
\item 
Construct a polynomial $f(x) \in \bF_q[X]$ of degree $\leq n -1$, such that $f(i) =x_i$ for $i=0, \cdots, n-1$.

\item 
Evaluate the polynomial at $i=\{n,\cdots, n+m-1\}$. That is,
\[
\mathsf{Ext}({\bf x})=(f(n),f(n+1),\cdots ,f(n+m-1))
\]
\end{enumerate}


\subsection{$\mathsf{AWTP_{PD}}$ Protocol }\label{sec_constr}

Let  the AWTP channel have alphabet  $\Sigma=\bF_q^{u}$  where $q>2uN^2$, and the message be ${\bf m} = \{m_1,\cdots ,m_{\ell}\} \in {\cal M}$, where  $m_i\in \bF_q$. Let $N$ denote the  transmission length over the AWTP channel.
\remove{ the length of transmission on first message round over public discussion channel be $n_1$, and second message round over public discussion channel be $n_2$.} 
We use a $\frac{u}{q}$-$\Delta$ universal $(q, q^{u-1}, q)$-hash family and the seedless $(uN, \ell, \ell\log q, 0)$-extractor, above. 

\begin{center}
{\bf \awtppd Protocol }
\end{center}
\begin{framed}

\begin{itemize}
\item Rnd 1: Alice $\overset{\mathsf{AWTP}}{\longrightarrow}$ Bob. 
For  $i\in N$:\\
 Alice randomly chooses a vector ${\bf r}_i=\{r_{i,1},\cdots, r_{i,u-1}\} \in \bF_q^{u-1}$, 
and  $\beta_i\in \bF_q$. Alice sends  $c=(c_1, \cdots, c_{N}) \in \bF_q^u$ with $c_i=\{{\bf r}_i, \beta_i\}$ to Bob,
over  the  AWTP channel. \\
Bob receives $y=(y_1, \cdots, y_{N})$, where  $y_i=\{{\bf r}_i',\beta_i'\}$. 

\item Rnd 2: Bob $\overset{\mathsf{PD}}{\longrightarrow}$ Alice. \\
Bob generates random keys, $(\alpha_1, \cdots, \alpha_{N})$, $\alpha_i \in \bF_q$, for the hash family,
 and generates ${\bf t}=(t_1, \cdots, t_{N})$ where, $t_i=\mathsf{hash}_{\alpha_i}({\bf r}'_i)+\beta_i'\mod q$. 
Bob maps $d_1=\{\alpha_1, \cdots, \alpha_{N}, t_1, \cdots, t_{N}\}$ 
to a  binary vector over $\bF_2$, and sends $d_1$ to Alice, over the PD channel. Alice receives $d_1$.

\item Rnd 3: Alice $\overset{\mathsf{PD}}{\longrightarrow}$ Bob. 
\begin{itemize}
\item Alice checks,
\[ \mathsf{hash}_{\alpha_i}({\bf r}_i)+\beta_i\stackrel{?}=t_i\mod q, \; i=1\cdots N\]
and constructs a binary vector ${\bf v}=(v_1, \cdots, v_N)$, where  with $v_i=1$ if $\mathsf{hash}_{\alpha_i}({\bf r}_i)+\beta_i=t_i\mod q$, and $v_i=0$, otherwise.  

\item Let, $v_{i_1}\cdots=v_{i_s}=1$. Alice does the following.\\
--concatenates all  ${\bf r}_{i_j}$ for which $v_{i_j}=1$, and obtains $({\bf r}_{i_1}||\cdots ||{\bf r}_{i_s})$ over $\bF_q$.\\
 --uses the extractor on this string, and obtains  a uniformly random string, 
${\bf k}=\ext({\bf r}_{i_1}||\cdots ||{\bf r}_{i_s})$.
\item Alice 
encrypts the message $\bf m$ and obtains  ${\bf c}=\{c_1, \cdots, c_{\ell}\}$, where $ c_i=k_i+ m_i\mod q$ for $i=1,\cdots, \ell$.
 Alice maps $d_2=\{{\bf c}, {\bf v}\}$ (over $\bF_q$) into a binary vector  and sends it  to Bob over the PD channel. 

Bob receives $d_2$.
\end{itemize}

\item 
Bob decodes $\mathsf{Dec}(y_1, d_1, d_2)$ as follows.
\begin{itemize}
\item  Constructs the vector $({\bf r}'_{i_1}||\cdots ||{\bf r}'_{i_s})$ with  ${\bf r}'_{i_j} \in \bF_q$, 
for  all $v_{i_j}=1$ in $\bf v$. 
He uses the extractor to obtain, 
 ${\bf k}'=\ext({\bf r}'_{i_1}||\cdots ||{\bf r}'_{i_s})$. 
\item  Recovers the message ${\bf m}'$ with $m'_i=c_i- k_i'\mod q$ for $i=1,\cdots, \ell$.
\end{itemize}
\end{itemize}
\end{framed}

\vspace{3mm}

\begin{lemma}\label{le_prosec}
The \awtppd protocol above, provides perfect secrecy if $\ell \leq (u-1)(1-\rho)N$.
\end{lemma}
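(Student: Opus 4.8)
The plan is to reduce perfect secrecy to a single statement about the extracted key: that $\mathbf{k}=(k_1,\dots,k_\ell)$ is uniform over $\bF_q^\ell$ and independent of everything Eve sees except the round-$3$ ciphertext. Write $V_E^-$ for Eve's entire view with the ciphertext $\mathbf{c}$ (where $c_i=k_i+m_i \bmod q$) removed; this comprises the read symbols $\{(\mathbf{r}_i,\beta_i):i\in S_r\}$ from the \awtp channel, the full round-$2$ transcript $\{\alpha_i,t_i\}_{i\in[N]}$, and the indicator vector $\mathbf{v}$. Since each $c_i$ is a one-time-pad encryption of $m_i$ under $k_i$, once $\mathbf{k}$ is shown uniform and independent of $V_E^-$ the joint law of $(V_E^-,\mathbf{c})$ is identical for every message, giving $\SD=0$. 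Thus the whole argument concerns the conditional distribution of $\mathbf{k}$.

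First I would locate the secret entropy, and here the key insight is the masking by $\beta_i$. For a position $i\notin S=S_r\cup S_w$ Eve neither reads nor corrupts $(\mathbf{r}_i,\beta_i)$, and the only round-$2$ data touching $i$ is $\alpha_i$ and $t_i=\mathsf{hash}_{\alpha_i}(\mathbf{r}_i)+\beta_i$. For fixed $\alpha_i$ the map $(\mathbf{r}_i,\beta_i)\mapsto(\mathbf{r}_i,t_i)$ is a bijection, so conditioned on $(\alpha_i,t_i)$ the block $\mathbf{r}_i$ is still uniform over $\bF_q^{u-1}$; moreover such an $i$ always passes Alice's check, so $v_i=1$ is deterministic and leaks nothing. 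Using independence of the blocks across positions, I would conclude that conditioned on $V_E^-$ the blocks $\{\mathbf{r}_i:i\notin S\}$ are jointly uniform and mutually independent. There are $N-|S|=(1-\rho)N$ such blocks, contributing $(u-1)(1-\rho)N$ uniform $\bF_q$-symbols, all of which lie in the string $\mathbf{r}_{i_1}\|\cdots\|\mathbf{r}_{i_s}$ fed to the extractor, since $[N]\setminus S\subseteq\{i:v_i=1\}$.

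Next I would push this through the Reed--Solomon extractor. After additionally conditioning on the (finitely many) source blocks at positions of $S$ that happen to pass — which only fixes them and cannot affect the free blocks — the extractor input is a symbol-fixing source whose uniform coordinates number $(u-1)(1-\rho)N\ge\ell$ by hypothesis. The technical heart is to verify that the interpolate-and-evaluate map outputs a uniform element of $\bF_q^\ell$: this map is affine, and its linear part restricted to the $k=(u-1)(1-\rho)N$ free coordinates has full rank $\ell$, because a polynomial of degree $<s(u-1)$ that vanishes at the $\ell$ output points together with the $s(u-1)-k$ fixed input points forms exactly a $(k-\ell)$-dimensional space (a root-counting/MDS argument), so the kernel has dimension $k-\ell$ and the image is all of $\bF_q^\ell$. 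The field is large enough, since $q>2uN^2$ exceeds the total number $s(u-1)+\ell$ of distinct evaluation points. Uniformity of $\mathbf{k}$ for every value of the conditioning is precisely the independence of $\mathbf{k}$ from $V_E^-$ that we need.

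The main obstacle I expect is the interaction of adaptivity with the conditioning: I must argue that $S$ and $\mathbf{v}$ are functions of $V_E^-$ — which holds because the \awtp channel is invoked only in round $1$ (so $S_r,S_w$ are fixed before any \pd exchange) and $\mathbf{v}$ depends only on $\mathbf{r}_i,\beta_i,\alpha_i,t_i$ and not on $\mathbf{m}$ — so that conditioning on $V_E^-$ is legitimate and never implicitly exposes a free block. The second delicate point is the \emph{perfect} ($\delta=0$) extraction: plain min-entropy does not suffice for a seedless extractor, so the proof must genuinely exploit the symbol-fixing structure and the MDS property of RS evaluation rather than quote a min-entropy bound. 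Handling these two points carefully is what turns the intuitive ``unread randomness seeds a one-time pad'' picture into a rigorous proof.
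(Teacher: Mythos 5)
Your proof is correct and takes essentially the same route as the paper's: reduce perfect secrecy to showing that $\mathbf{k}$ is uniform and independent of Eve's view minus the round-3 ciphertext, source the entropy from the $(1-\rho)N$ positions that are neither read nor written (whose $t_i$ values leak nothing because of the $\beta_i$ masking), and use the symbol-fixing structure with the Reed--Solomon extractor to obtain a perfectly uniform one-time-pad key. The only difference is one of rigor, in your favor: you prove in-line what the paper glosses over or cites --- the bijection argument showing $(\alpha_i,t_i)$ reveals nothing about $\mathbf{r}_i$ at untouched positions, the rank/root-counting argument for the extractor's perfect ($\delta=0$) uniformity, and the extra conditioning that legitimately fixes blocks at positions of $S$ that pass Alice's check --- all of which are handled correctly.
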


\begin{lemma}\label{le_prorel}
The probability of decoding error in the \awtppd protocol   is $\delta\leq \frac{uN}{q}$.
\end{lemma}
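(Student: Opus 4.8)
The plan is to isolate the single event that can cause a decoding error and then bound its probability by a union bound over the $N$ coordinates sent on the \awtp channel. Recall that Bob recovers $\mathbf{m}'$ by extracting $\mathbf{k}'=\ext(\mathbf{r}'_{i_1}||\cdots||\mathbf{r}'_{i_s})$ from exactly the coordinates with $v_i=1$, while Alice encrypts under $\mathbf{k}=\ext(\mathbf{r}_{i_1}||\cdots||\mathbf{r}_{i_s})$. Since $\ext$ is a deterministic map and the indicator vector $\mathbf{v}$ travels over the authenticated \pd channel (so Alice and Bob agree on the support $\{i:v_i=1\}$ and on the order of concatenation), we get $\mathbf{k}=\mathbf{k}'$, hence $M_\cR=M_\cS$, whenever $\mathbf{r}_i=\mathbf{r}'_i$ for every $i$ with $v_i=1$. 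Therefore a decoding error can occur only if there is a coordinate $i$ with $v_i=1$ yet $\mathbf{r}_i\neq\mathbf{r}'_i$; I would call this event $\mathsf{Bad}$ and argue $\delta\leq\bPr[\mathsf{Bad}]$.

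First I would note that $\mathbf{r}_i\neq\mathbf{r}'_i$ is possible only at a coordinate the adversary actually writes to, i.e. $i\in S^w$, so at most $\rho_w N\leq N$ coordinates can trigger $\mathsf{Bad}$. Then, fixing such a corrupted coordinate $i$, the rule defining $\mathbf{v}$ gives $v_i=1$ iff $\mathsf{hash}_{\alpha_i}(\mathbf{r}_i)+\beta_i=\mathsf{hash}_{\alpha_i}(\mathbf{r}'_i)+\beta'_i$, which rearranges to $\mathsf{hash}_{\alpha_i}(\mathbf{r}_i)-\mathsf{hash}_{\alpha_i}(\mathbf{r}'_i)=\beta'_i-\beta_i$. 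When $\mathbf{r}_i\neq\mathbf{r}'_i$ this is precisely a difference collision, of the form appearing in the $\Delta$-universality definition, for the two distinct inputs $\mathbf{r}_i,\mathbf{r}'_i$ at the target value $t=\beta'_i-\beta_i$.

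The heart of the argument is a timing/independence observation. The seed $\alpha_i$ is drawn uniformly by Bob in Round 2, strictly \emph{after} the Round 1 corruption has been committed on the \awtp channel, and the adversary can influence neither $\alpha_i$ nor $t_i$ afterwards because Rounds 2 and 3 use the authenticated \pd channel. Consequently, conditioned on the adversary's adaptive Round 1 corruption, the distinct inputs $\mathbf{r}_i,\mathbf{r}'_i$ and the target $t=\beta'_i-\beta_i$ are all fixed while $\alpha_i$ is still uniform over $\bF_q$. Invoking the $\tfrac{u}{q}$-$\Delta$ universality of the $(q,q^{u-1},q)$-hash family of Section~\ref{sec_mac}, at most $u$ of the $q$ possible values of $\alpha_i$ satisfy the difference equation, so $\bPr[v_i=1\mid\mathbf{r}_i\neq\mathbf{r}'_i]\leq\tfrac{u}{q}$.

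Finally I would assemble the pieces by a union bound: summing $\bPr[v_i=1,\ \mathbf{r}_i\neq\mathbf{r}'_i]\leq\tfrac{u}{q}$ over the at most $N$ coordinates that can be corrupted yields $\delta\leq\bPr[\mathsf{Bad}]\leq N\cdot\tfrac{u}{q}=\tfrac{uN}{q}$, as claimed. The step needing the most care is the independence claim: one must check that the adaptive adversary's view at the moment it fixes the Round 1 error carries no information about the $\alpha_i$ that Bob samples later, and that authenticity of the \pd channel rules out any tampering with $d_1$ or $d_2$. Once this is pinned down, the $\Delta$-universality bound applies coordinatewise and the concluding union bound is routine.
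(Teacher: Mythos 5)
Your proof is correct and follows essentially the same route as the paper's: reduce a decoding error to the event that some coordinate with $v_i=1$ has ${\bf r}_i\neq {\bf r}'_i$, apply a union bound over the $N$ coordinates, and bound each term by $\frac{u}{q}$ via the $\Delta$-universality of the hash family applied to the difference equation $\mathsf{hash}_{\alpha_i}({\bf r}_i)-\mathsf{hash}_{\alpha_i}({\bf r}'_i)=\beta'_i-\beta_i$. Your explicit treatment of the timing argument (that $\alpha_i$ is sampled after the adversary's corruption is committed, so it is uniform and independent of ${\bf r}_i,{\bf r}'_i,\beta_i,\beta'_i$) makes precise a step the paper leaves implicit.
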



\begin{lemma}
The rate of the \awtppd  protocol family is  $\mathsf{R}=1-\rho$.
\end{lemma}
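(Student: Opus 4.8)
The plan is to compute the rate of a single instance of the protocol directly from the definition $\mathsf{R}(\Pi) = \frac{\log|\cM|}{N\log|\Sigma|}$, and then to pass to the limit over the family using Definition \ref{def_smtfamily}. First I would record the relevant sizes: the message is $\mathbf{m}=\{m_1,\dots,m_\ell\}$ with each $m_i\in\bF_q$, so $\log|\cM|=\ell\log q$; the AWTP alphabet is $\Sigma=\bF_q^{u}$, so $\log|\Sigma|=u\log q$; and Alice sends exactly $N$ symbols of $\Sigma$ over the AWTP channel. Substituting gives $\mathsf{R}=\frac{\ell\log q}{N\,u\log q}=\frac{\ell}{Nu}$, so the whole question reduces to taking $\ell$ as large as the secrecy and reliability constraints permit.

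Next I would invoke Lemma \ref{le_prosec}: perfect secrecy holds precisely when $\ell\le(u-1)(1-\rho)N$, which reflects the fact that the $(1-\rho)N$ positions that are neither read nor written each carry $u-1$ symbols of $\bF_q$-randomness in $\mathbf{r}_i$ that are hidden from the adversary, and this bounds the extractable key length. Taking $\ell$ equal to this maximum yields
\[
\mathsf{R}=\frac{(u-1)(1-\rho)N}{Nu}=(1-\rho)\Bigl(1-\tfrac{1}{u}\Bigr).
\]
For reliability I would use Lemma \ref{le_prorel}, which gives $\delta\le\frac{uN}{q}$; since the construction fixes $q>2uN^2$, this forces $\delta<\frac{1}{2N}$, independently of $u$.

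Finally, I would assemble the family. Since the single-instance rate is $(1-\rho)(1-1/u)$, a family with a fixed $u$ only attains $(1-\rho)(1-1/u)<1-\rho$; the value $1-\rho$ is reached solely in the limit, so I would let the parameter grow with the block length, e.g. $u=u(N)\to\infty$ (say $u(N)=N$), choosing for each $N$ a prime $q>2uN^2$ (which exists by Bertrand's postulate and still satisfies $u\le q-1$ together with the extractor constraint $q\ge uN+\ell$, as $uN+\ell<2uN\le 2uN^2<q$). Then $\mathsf{R}^N=(1-\rho)(1-1/u(N))\to 1-\rho$ while $\delta^N<\frac{1}{2N}\to 0$. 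Hence for every $\xi>0$ there is an $N_0$ such that for all $N\ge N_0$ both $\mathsf{R}^N\ge(1-\rho)-\xi$ and $\delta^N<\xi$, which is exactly the assertion (Definition \ref{def_smtfamily}) that the family achieves rate $\mathsf{R}=1-\rho$. The one point deserving care, and the main obstacle, is precisely this: one must let $u$ tend to infinity rather than keep it fixed, and must simultaneously check that every side condition of the construction (a prime $q>2uN^2$, validity of the RS-code extractor, and the perfect-secrecy budget $\ell\le(u-1)(1-\rho)N$) survives this growth while $\delta$ still tends to $0$.
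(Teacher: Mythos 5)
Your proposal is correct, and its core is the same computation as the paper's: the rate is $\frac{\ell\log q}{Nu\log q}$ with $\ell=(u-1)(1-\rho)N$ supplied by Lemma \ref{le_prosec}, and reliability comes from Lemma \ref{le_prorel} via $\delta\leq \frac{uN}{q}<\frac{1}{2N}$ using $q>2uN^2$. The one genuine difference is how $u$ is driven to infinity. The paper fixes $u=\frac{1}{\xi}$ \emph{after} $\xi$ is given (together with $N_0\geq\frac{1}{\xi}$), so every member of the resulting family has rate exactly $(1-\xi)(1-\rho)\geq 1-\rho-\xi$ and error $\delta\leq\frac{1}{2N}\leq\xi$; strictly speaking this produces a different family for each $\xi$, each achieving rate $1-\rho-\xi$. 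You instead let $u=u(N)\to\infty$ within a single family (e.g.\ $u(N)=N$), so that $\mathsf{R}^N=(1-\rho)(1-1/u(N))\to 1-\rho$ and $\delta^N<\frac{1}{2N}\to 0$ simultaneously; this yields one fixed family that achieves rate $1-\rho$ in the literal sense of Definition \ref{def_smtfamily} (family first, then ``for all $\xi$''), at the cost of re-verifying the side conditions along the growing sequence --- existence of a prime $q>2uN^2$, the hash-family condition $u\leq q-1$, and the RS-extractor requirement $q\geq uN+\ell$ --- which you do correctly. Both arguments are valid: the paper's per-$\xi$ parameterization is lighter on bookkeeping, while yours handles the quantifier order in the achievability definition more faithfully. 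One cosmetic caution: Lemma \ref{le_prosec} is stated as an implication (``perfect secrecy \emph{if} $\ell\leq(u-1)(1-\rho)N$''), not an equivalence, so your phrase ``precisely when'' overstates it, though you only ever use the stated direction.
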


\begin{proof}
For a  small $\xi>0$,  let the parameters of \awtppd protocol be chosen as $u=\frac{1}{\xi}$, $q>2uN^2$, $\ell=(u-1)(1-\rho)N$, $N_0\geq \frac{1}{\xi}$ and $\Sigma=\bF_q^{u}$.  For uniform  message  distribution, we have $\log |\cM|=\ell \log q$, and 
so for any $N>N_0$, the rate of \awtppd protocol family  is given by,
\[
\frac{\log |{\cal M}|}{N\log |\Sigma|}=\frac{(u-1)(1-\rho)N\log q}{uN\log q} =(1-{\xi})(1-\rho)
\geq 1-\rho-\xi
\]
\remove{
\[
\begin{split}
\frac{\log |{\cal M}|}{N\log |\Sigma|}&=\frac{(u-1)(1-\rho)N\log q}{uN\log q}\\
&=(1-{\xi})(1-\rho)\\
&\geq 1-\rho-\xi
\end{split}
\]
}
The probability of decoding error is bounded by,
\[
\delta \leq \frac{uN}{q}\leq \frac{1}{2N}\leq  \frac{\xi}{2}\leq \xi
\]
\end{proof}

\begin{theorem}
For any small $\xi>0$, the protocol above  is a $(0, \delta)$-\awtppd protocol with rate $\mathsf{R}(\Pi^N)=1-\rho-\xi$.
The transmission alphabet over the \awtp channel is of size $|\Sigma|=q^{\frac{1}{\xi}}$, and the decoding error is $\delta < \xi$. 
The rate of  the protocol 
 approaches $\mathsf{R}=1-\rho$ as, $N\rightarrow \infty$. 
 The protocol has \rc=3 and the decoder computation is $\mathcal{O}((N\log q)^2)$. 
\end{theorem}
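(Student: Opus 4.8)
The plan is to obtain the theorem by assembling the three lemmas already proved for the construction (secrecy, reliability, and rate), since those carry all the substantive content; what remains is to fix a single consistent parameter setting and then to account separately for the message round complexity and the decoder's running time. First I would instantiate the parameters exactly as in the rate lemma: given $\xi>0$, set $u=1/\xi$, choose a prime $q>2uN^2$, put $\ell=(u-1)(1-\rho)N$ and $\Sigma=\bF_q^{u}$ so that $|\Sigma|=q^{u}=q^{1/\xi}$, and take $N_0\geq 1/\xi$. Perfect secrecy ($\epsilon=0$) is then immediate from Lemma~\ref{le_prosec}, since this $\ell$ meets the hypothesis $\ell\leq(u-1)(1-\rho)N$ with equality.

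For reliability I would invoke Lemma~\ref{le_prorel}, which bounds the decoding error by $\delta\leq uN/q$; substituting $q>2uN^2$ gives $\delta<1/(2N)$, and for every $N\geq N_0\geq 1/\xi$ this is at most $\xi/2<\xi$. Hence $\delta<\xi$ and the protocol is a genuine $(0,\delta)$-\awtppd protocol. The rate is then read off from the rate lemma: $\mathsf{R}(\Pi^N)=\log|\cM|/(N\log|\Sigma|)=(1-\xi)(1-\rho)\geq 1-\rho-\xi$, uniformly for all $N\geq N_0$. Together with $\delta<\xi$ this is precisely the achievability condition of Definition~\ref{def_smtfamily}, so the family $\{\Pi^N\}$ achieves rate $1-\rho$, and letting $\xi\to0$ shows the rate approaches the capacity $1-\rho$ as $N\to\infty$.

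It remains to settle the two structural claims. The message round complexity can be read directly off the description: Round~1 sends Alice~$\to$~Bob over the \awtp channel, Round~2 sends Bob~$\to$~Alice over \pd, and Round~3 sends Alice~$\to$~Bob over \pd, so $\mathsf{RC}=3$; since $\rho_r+\rho_w\geq1$, Theorem~\ref{the_rc} supplies the matching lower bound $\mathsf{RC}\geq3$, proving round optimality. For the decoder cost, I would argue that Bob's work is dominated by the single application of the RS-based seedless extractor of Section~\ref{sec_extractor}, the remaining steps (selecting the coordinates flagged by $\mathbf{v}$ and the final $\ell$ subtractions) being lower order. That extractor interpolates a polynomial of degree $<uN$ through $uN$ points of $\bF_q$ and evaluates it at $\ell$ further points, costing $\mathcal{O}((uN)^2)$ operations in $\bF_q$; treating $u=1/\xi$ as a constant and charging $\mathcal{O}(\log^2 q)$ bit operations per field operation yields the claimed $\mathcal{O}((N\log q)^2)$.

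I expect no genuine obstacle, as the theorem is a consolidation of results already in hand; the only step needing care is the decoder analysis, where one must verify that the extractor's interpolation and evaluation indeed dominate the coordinate selection and modular arithmetic performed in the other rounds, and that the polynomial-arithmetic count reduces to the stated bound once $u$ is fixed.
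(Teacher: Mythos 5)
Your proposal is correct and matches the paper's (implicit) argument: the paper states this theorem without a separate proof, treating it exactly as you do — a consolidation of Lemma~\ref{le_prosec}, Lemma~\ref{le_prorel}, and the rate lemma under the single parameter choice $u=1/\xi$, $q>2uN^2$, $\ell=(u-1)(1-\rho)N$, $\Sigma=\bF_q^u$, with $\mathsf{RC}=3$ read off the protocol description and the decoder cost dominated by the RS-based extractor's interpolation and evaluation. Your accounting of the reliability bound ($\delta\leq uN/q<1/(2N)\leq\xi/2$) and of the $\mathcal{O}((uN)^2\log^2 q)=\mathcal{O}((N\log q)^2)$ decoder cost with $u$ constant is exactly what the paper intends.
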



\section{\awtppd Protocol and SMT-PD}\label{sec_smtawtp}


In  {\em SMT-PD }
 a  sender $\cS$ (Alice) and a receiver $\cR$ (Bob) interact over $N$ node disjoint paths ({\em wires}) in  a synchronous network and  a public discussion channel. \emph{Wires and the PD  both are used for two-way communication}. 
An SMT-PD protocol proceeds in rounds. \emph{In each round, Alice (Bob) sends
protocol messages  over wires and/or the PD channel, which will be received by Bob (Alice) before the end of the  round.}
(Note that a round in SMT-PD may consist of one or  two message rounds.)
A computationally unbounded adversary (Eve) can corrupt up to $t$ wires. Eve can 
eavesdrop, modify or block messages sent over a corrupted wire. 
Adversary is adaptive and can corrupt wires any time during the protocol execution and after observing communications over the wires that she has corrupted so far. We consider protocol families ${\bf \Pi}= \{ \Pi^N: N \in \mathbb{N} \}$ defined for   $t= \rho N$  where $0<\rho<1$ is a constant.

\begin{definition}\label{def_SMT}
A protocol between $\cS$  and $\cR$  is an $(\epsilon,\delta)$-\emph{secure message transmission with public discussion} ($(\epsilon,\delta)$-SMT-PD) protocol if 
 the following two conditions are  satisfied.

\begin{itemize}
\item Privacy: For every two messages $m_1, m_2\in \cM$ and randomness $r_E$ used by Eve, 
\[
\begin{split}
\max_{m_1, m_2}\SD(&\mathsf{View}_{\mathsf E}(\mathsf{SMT_{PD}}(m_1),r_E),\mathsf{View}_{\mathsf E}(\mathsf{SMT_{PD}}(m_2), r_E))\leq \epsilon,
\end{split}
\]
where the probability is  over the randomness of $\cS, \cR$.

\item Reliability: For any  message $M_\cS$ chosen by Alice, Bob recovers the message with probability larger than $1-\delta$; that is, 
\[
\bPr(M_\cR\neq M_\cS)\leq \delta,
\]
where the probability is over the randomness of players $\cS, \cR$ and Eve.
\end{itemize}
\end{definition}

\begin{remark}
In the above definition of SMT-PD, (i) $S_r= S_w$, and for $|S_r|=|S_w|= \rho N$, (ii) wires are used for two-way communication, and (iii) 
in each   message round of the protocol, Alice (Bob) can invoke both types of channels  simultaneously (wires and the PD) and so send two protocol message.
In our model  in Section \ref{sec_awtppd} however, (i)  $S_r$ and $S_w$ can be chosen
arbitrarily, (ii)  \awtp is from Alice to Bob only, and 
(iii) in each message round one message  over  one channel (\awtp, or PD) can be sent.
\end{remark}

Efficiency parameters  of an SMT-PD protocol are, \emph{Round Complexity} $\mathsf{RC}$, \emph{Transmission Rate} $\mathsf{TR}$, and \emph{computational complexity}. 
\begin{itemize}
\item $\mathsf{RC}$ is the number of rounds  of a protocol.   We also use \rc~  to denote  message round complexity of these protocols. 
\item  $\mathsf{TR}$ is the number of communicated bits for transmitting a single message bit.
Let ${\cal W}_i$ denote the set of possible transmissions on wire $i$. The transmission rate of an SMT-PD protocol is given by,
\[
\mathsf{TR}=\frac{\sum_{i=1}^N \log |{\cal W}_i|}{\log |\cM|}
\]
An SMT-PD protocol is {\em optimal} if the transmission rate is of  the order (Big $\mathcal{O}$ notation) of  the lower bound.
\item An SMT-PD protocol is computationally efficient if the computational complexity of the sender and the receiver algorithms,
is polynomial in $N$. 
\end{itemize}

\subsection{  \awtppd and One-way SMT-PD}



AWTP codes are defined over an alphabet $\Sigma$ and all components of a codeword are elements of $\Sigma$.
In SMT protocols however, the set of transmissions over different wires may be different.

\begin{definition}[Symmetric SMT]
An SMT protocol is called a {\em symmetric} if the protocol remains invariant under any permutation of  the wires.
\end{definition}

Let ${\cal W}^i_j, j=1\cdots N, i=1\cdots r$, 
 denote the set of possible transmissions on wire $j$  in an $r$-round SMT protocol.
For a symmetric protocol, ${\cal W}^i_j = {\cal W}^i$  is independent of $j$.
{\em All known constructions of  threshold  SMT protocols are symmetric.}

\begin{definition}
A {\em one-way symmetric secure message transmission with public discussion (\smtone) protocol}  is an
 SMT-PD protocol in which transmission over wires is in one direction (from  Alice to Bob, or  Bob to Alice). The protocol is invariant under any permutation of  the wires. 
 The $N$ wires and the \pd channel, can be invoked  simultaneously. 
\end{definition}

We consider protocols where  Alice wants to send
 a message to Bob and so \awtp channel is used by Alice.

\begin{theorem}\label{the_relation1}
There is a one-to-one correspondence between restricted $(\epsilon,\delta)$-\awtppd protocols and \smtone protocols. 
The following results on the latter protocols, follow from the results  on the  former  in Section \ref{sec_upperbound}. 
\begin{enumerate}

\item The lower bound on the transmission rate of  a \smtone protocol is,
\begin{equation}\label{epsec}
\begin{split}
&\mathsf{TR}\geq \frac{N}{N-t+\epsilon'+2\bH(\delta)N+\delta nN}
\end{split}
\end{equation}
where $\epsilon'=2N\epsilon (1+\log_{|\cal W|}\frac{1}{\epsilon})+2\epsilon nN$.

\noindent
For protocols with perfect secrecy ($\epsilon=0$) we have, 
\begin{eqnarray}\label{perf}
\mathsf{TR}\geq \frac{N}{N-t+2\bH(\delta)N+\delta nN}.
\end{eqnarray}
\noindent

\item The lower bound on the  message round complexity of a $(\epsilon,\delta)$-$\mbox{SMT}^{[ow]}$-$\mbox{PD}$ protocol is  three. 
\end{enumerate}
\end{theorem}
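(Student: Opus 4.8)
The plan is to prove the theorem in three stages: first make the claimed bijection between restricted $(\epsilon,\delta)$-\awtppd protocols and \smtone protocols completely explicit, and then transport the two quantitative bounds of Section~\ref{sec_upperbound} through it. The bijection is the engine; once it is in place, parts (1) and (2) are essentially translations.

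To build the correspondence I would identify the $N$ wires of the \smtone protocol with the $N$ coordinate positions of the restricted \awtp channel, matching each round over the wires with one invocation of the \awtp channel. Symmetry of the \smtone protocol means the per-round transmission set ${\cal W}^i_j={\cal W}^i$ does not depend on the wire index $j$; pooling these across rounds gives a single common alphabet per position, i.e. the uniform-alphabet structure of an \awtp code with $|\Sigma|$ playing the role of $|{\cal W}|$. An SMT adversary corrupting a set of $t$ wires may both eavesdrop on and arbitrarily overwrite exactly those wires, so its read and write sets coincide, $S_r=S_w=S$ with $|S|=t=\rho N$ --- precisely the restricted condition. The PD channel and the one-way (Alice-to-Bob) direction of the wires are identical in the two models, so the secrecy parameter $\epsilon$, the reliability parameter $\delta$, and the message round count are preserved verbatim in both directions of the map.

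For part (1) I would use the identity $\mathsf{TR}=1/\mathsf{R}$. Under the correspondence $\sum_{i=1}^N\log|{\cal W}_i|$ equals the total number of bits sent over the \awtp channel, which is the denominator of $\mathsf{R}$ and the numerator of $\mathsf{TR}$, so $\mathsf{TR}=1/\mathsf{R}$ exactly. I would then invoke not the asymptotic capacity statement but the non-asymptotic inequality established in Step~2 of the proof of Theorem~\ref{the_smt_bound1},
\[
\mathsf{R}=\frac{\log|\cM|}{N\log|\Sigma|}\leq 1-\rho+2\epsilon\bigl(1+\log_{|\Sigma|}\tfrac{1}{\epsilon}\bigr)+2\epsilon n+2\bH(\delta)+\delta n,
\]
substitute $\rho=t/N$ and $|\Sigma|=|{\cal W}|$, invert the inequality (legitimate since the right-hand side is positive), and clear the factor $N$ from the denominator. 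Collecting $2N\epsilon(1+\log_{|{\cal W}|}\tfrac{1}{\epsilon})+2\epsilon nN$ into $\epsilon'$ yields (\ref{epsec}) verbatim, and setting $\epsilon=0$ (so $\epsilon'=0$) gives (\ref{perf}).

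For part (2), the restricted channel has $\rho_r=\rho_w=t/N$, so $\rho_r+\rho_w=2t/N\geq 1$ exactly in the regime $2t\geq N$ where the PD channel is genuinely required; Theorem~\ref{the_rc}(ii) then forces the (perfectly secure) \awtppd protocol to use at least three message rounds, and since the bijection preserves the message round count, the same lower bound holds for the \smtone protocol. The main obstacle I anticipate is not the arithmetic of part~(1) but making the correspondence airtight at the level of adversary capabilities: I must argue that the SMT actions \emph{modify} and \emph{block} on a corrupted wire are faithfully captured by the additive-error write model of the \awtp channel --- an arbitrary overwrite is realized as an additive error of the appropriate value, and a blocked or erased symbol is modelled the same way --- and, conversely, that permutation-invariance of a symmetric SMT protocol is genuinely equivalent to the uniform per-coordinate alphabet of \awtp codes, so that no adversarial strategy is gained or lost under either direction of the map. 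Settling this equivalence of adversary power is what legitimizes reusing the bounds of Theorems~\ref{the_smt_bound1} and \ref{the_rc} unchanged.
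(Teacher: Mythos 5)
Your overall strategy---make the wire-to-coordinate bijection explicit, check that secrecy, reliability and channel-invocation counts are preserved, then transport the Section~\ref{sec_upperbound} bounds through it---is exactly the paper's strategy, and your bijection (wires as codeword coordinates, $S_r=S_w=S$ with $|S|=t=\rho N$, identical \pd and one-way direction) matches the paper's appendix argument. The differences are in how the two bounds are transported, and here your choices are arguably better than the paper's own. For part (1) you invert the \emph{non-asymptotic} inequality from Step 2 of the proof of Theorem~\ref{the_smt_bound1}, i.e.\ $\mathsf{R}\leq 1-\rho+2\epsilon(1+\log_{|\Sigma|}\frac{1}{\epsilon})+2\epsilon n+2\bH(\delta)+\delta n$; this is what actually produces (\ref{epsec}) verbatim, including the $2\bH(\delta)N+\delta nN$ terms. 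The paper's appendix instead inverts the capacity bound (which has no $\delta$ terms) and moreover contains slips ($1-2\rho$ and $N-2t$ in the denominator), so its displayed conclusion does not match the theorem as stated; your route repairs this. For part (2) you derive the three-message-round bound internally from Theorem~\ref{the_rc}(ii) via the bijection, which is what the main text promises (``follow from the results on the former''), whereas the paper's appendix instead cites the external round-complexity bound for general SMT-PD from \cite{SJST09}. Your route buys self-containment but, as you correctly flag, only covers perfect secrecy (Theorem~\ref{the_rc} is stated for $(0,\delta)$ protocols) and only in the regime $2t\geq N$ where $\rho_r+\rho_w\geq 1$; the paper's citation route covers $(\epsilon,\delta)$ but leans on an outside result. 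Both caveats are genuine, and the paper itself glosses over them, so noting them explicitly strengthens rather than weakens your write-up.
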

\begin{proof}
It is easy to see that an \smtone protocol gives a \awtppd protocol: 
using the same conversion as in \cite{PS13}  a protocol message over \smtone wires gives a  protocol message over \awtp channel by considering   wire $i$ as component $i$ of the \awtp codeword;
 messages over \pd will stay the same in both. The conversion holds in reverse direction also. 
The lower bound on transmission rate  follows by noting that the 
 transmission rate of  a  \smtone protocol is the inverse of the rate of the corresponding \awtppd protocol, and
 so the upper bound on the rate of \awtppd protocols implies a lower bound on the transmission rate of  \smtone protocols. 
 The lower bound on  message round complexity follows from the similar  bound on the corresponding  \awtppd protocols.
 Details are given in Appendix \ref{ap_the_relation1}. 
\end{proof}

\subsubsection{Construction} \label{const}
A $(\epsilon,\delta)$-\awtppd protocol gives a 
restricted-$(\epsilon,\delta)$-\awtppd protocol with $\rho=\rho_r = \rho_w$. This latter, 
using the protocol  conversion in Theorem \ref{the_relation1}, gives an  \smtone protocol.  
In Section \ref{sec_constr} we gave the construction of a $(0,\delta)$-\awtppd protocol with minimum number of  message rounds and rate approaching the capacity of the $(\rho_r, \rho_w)$-\awtp channel.
This  leads to the following.

\begin{lemma}\label{le_constrsmt1}
There is a three  message  round \smtone protocol,
with  transmission rate,  $\mathcal{O}(\frac{N}{N-t})$,  and  decoding computational complexity equal to, $\mathcal{O}((N\log q)^2)$.
\end{lemma}

\subsubsection{Comparison with known results}

In \cite{GGO10} it was shown that secure SMT-PD protocols exist 
for $N \geq t+1$, and the following lower bound on the transmission  rate  was derived,
\begin{eqnarray} \label{smt-pd}
\mathsf{TR}\geq \frac{N\cdot (-\log (\frac{1}{|\cM|} +2\epsilon)-\bH(\sqrt{\delta})-2m\sqrt{\delta})}{(N-t)m}.
\end{eqnarray}
Here, $m=\log |\cM|$. 
The bound gives a lower bound on the transmission rate of \smtone protocols as an \smtone protocol is 
an SMT-PD protocol with extra restriction. 
 None of the two bounds, (\ref{epsec}) and (\ref{smt-pd}), completely dominates the other: 
\begin{enumerate}
\item For $\epsilon=0$ and $\delta>0$, (\ref{smt-pd}) will be  a  tighter bound. 
This is because  for perfectly secure SMT-PD,  for 
 $\log |\cM|\gg\bH(\sqrt{\delta})$, the bound  
 (\ref{smt-pd}) can be written as, 
\begin{eqnarray}  
\mathsf{TR}\geq \frac{N}{N-t}\frac{(1-2\sqrt{\delta})\log |\cM| }{\log |\cM|}.
\end{eqnarray}

From,
\[
\frac{N}{N-t}(1-2\sqrt{\delta})
= \frac{N-2\sqrt{\delta}N}{N-t}
\geq \frac{N}{N-t+2\sqrt{\delta}N}
\geq \frac{N}{N-t+2\bH(\delta)N+\delta n N},
\]

we conclude that the bound  (\ref{smt-pd})  
is tighter than the bound   Eq. (\ref{perf}).

\item For $\delta \approx 0$ and $\epsilon=\frac{a}{|\cM|}$ however, (\ref{epsec}) could give a higher value. For example, consider $|\cM|=2^N$, $\epsilon=\frac{1}{|\cM|}$, and $n={\cO} (N)$. The bound  (\ref{smt-pd}) is, 
\[
\mathsf{TR} \geq \frac{N\cdot (-\log (\frac{1}{|\cM|} +\frac{2}{|\cM|}))}{(N-t)\log |\cM|}
= \frac{N}{N-t}(1-\frac{\log 3}{N}), 
\] 
and the bound (\ref{epsec}) is,
\[
\mathsf{TR} \geq \frac{N}{N-t+\epsilon'}
 \geq \frac{N}{N-t+2\frac{N}{2^N} (1+N)+\cO(\frac{N^2}{2^N})} 
 =\frac{N}{N-t+\cO(\frac{N^2}{2^N})}. 
\]
Hence the bound   (\ref{epsec}) is tighter than (\ref{smt-pd}) for large $N$ approaching  infinity. 
\end{enumerate}



In \cite{SJST09}, it was shown that the minimum round complexity of an SMT-PD protocol is three,   and \pd must be invoked in at least two rounds.  Since an \smtone 
 is an SMT-PD with extra restrictions, 
 the same bounds also hold for them.
The rate-optimal \smtone protocol in Section \ref{sec_constr} has three  message rounds, two of which use
\pd,  and so achieves the lower bound on the number of rounds of 
\smtone protocols. 
\vspace{3mm}

\begin{table}[H]\label{state100}

\begin{center}
\centering \caption{Comparison with SMT-PD protocols  }	\renewcommand{\arraystretch}{1.5}
\renewcommand{\multirowsetup}{\centering}
\begin{tabular}{|c|c|c|c|c|c|}	
\hline \textbf{SMT-PD} & \pbox{30mm}{\bf Num of Message\\ Rnds} &  \pbox{30mm}{\bf Read  and\\ Write Sets} & \pbox{30mm}{\bf Communication\\ over $\sfPD$}  & \pbox{25mm}{\bf Info. \\ Rate} &  
 {\bf Trans. Rate}  \\
\hline Shi \textit{et al.} \cite{SJST09}  &  1 $\sfSMT$ 2 $\sfPD$ & $S_r=S_w$ $\rho\leq 1$ & $\log |\cM|$ & $1-\frac{t}{N}-\xi$
&   $\cO(\frac{N}{N-t})$ \\ 
\hline Garay \textit{et al.} Prot.  I  \cite{GGO10}  &  1 $\sfSMT$ 2 $\sfPD$ & $S_r=S_w$ $\rho\leq 1$ & $\log |\cM|$ & $1-\frac{t}{N}-\xi$ 
 & $\cO(\frac{N}{N-t})$ \\ 
\hline Garay  \textit{et al.}  Prot. II \cite{GGO10} & 2 $\sfSMT$ 2 $\sfPD$ & $S_r=S_w$ $\rho\leq 1$ & $\log\log |\cM|$ & $c(1-\frac{t}{N})$ 
 &   $\cO(\frac{N}{N-t})$ \\ 
\hline This Work  &  1 $\sfSMT$ 2 $\sfPD$ & $\rho\leq 1$ & $\log |\cM|$ & $1-\frac{t}{N}-\xi$
& $\cO(\frac{N}{N-t})$ \\ 
\hline								
\end{tabular}
\end{center}

\end{table}

$c$ is a constant which is no more than $\frac{1}{3}$. The information rate of Protocols I and II are derived  in  Appendix \ref{ap_sec6}. 

\section{Conclusion}
We motivated and  introduced  $\mathsf{AWTP_{PD}}$, where Alice and Bob, in addition to the 
\awtp channel, have access to a public discussion channel and showed that with this new
resource, secure  communication is possible even when $\rho_r+\rho_w\geq 1$ as long as $\rho<1$.  We derived an upper  bound on the information rate, and a lower bound on the number of message rounds of  protocols that provide $\epsilon$-secrecy and $\delta$-reliability, and constructed an optimal protocol family that achieve both these bounds.
We showed the relationship between \awtppd and \smtone protocols in which wires are used by Alice only, and gave the construction of an optimal  \smtone protocol with minimum number of message rounds. A three-round protocol SMT-PD (two-way wires) with the same rate had been constructed in \cite{SJST09}.  
Our construction shows that 
 assuming one-way communication over  wires does not affect the number 
 of message rounds of the optimal protocols.
 
 \smtone protocols  remove the restriction of $\rho_r+\rho_w\leq 1$
and allow secure communication when $\rho_r+\rho_w\geq 1$  as long as $|S_r\cup S_w| < N$.
In our model although we allow interaction, but  the \awtp channel is  one-way.  
An interesting open question is to obtain rate and \rc~ lower bounds for the case that interaction over the \awtp channel is possible.

\bibliographystyle{abbrv}
\bibliography{1.bib}

\begin{appendix}


\section{Proof of Section \ref{sec_upperbound}}

\subsection{Proof of Lemma \ref{le_smt_bound3}}\label{ap_smt_bound3}

\begin{proof}
The proof is similar to Theorem 4.9 \cite{BTV12} and uses Pinsker's Lemma:
\begin{lemma}
Let $P$, $Q$ be probability distributions. Let $ \SD(P , Q)\leq \epsilon$. Then 
\[
\bH(P ) - \bH(Q) \leq 2\epsilon \cdot \log(\frac{|P\cup Q|}{\epsilon})
\]
\end{lemma}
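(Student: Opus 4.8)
The plan is to read this as the standard continuity estimate for Shannon entropy (a Fannes/Cover--Thomas-type bound), with $|P\cup Q|$ interpreted as the size $n$ of the union of the supports of $P$ and $Q$. First I would pass from statistical distance to the $\ell_1$ distance: writing $\Delta(x)=|P(x)-Q(x)|$ and $\mu=\sum_x\Delta(x)=2\,\SD(P,Q)\le 2\epsilon$, and setting $\phi(t)=-t\log t$, I express the target quantity as $\bH(P)-\bH(Q)=\sum_x\bigl(\phi(P(x))-\phi(Q(x))\bigr)$, where the sum runs over the $n$ points of the combined support. The whole estimate then reduces to controlling this sum.

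The heart of the argument is a \emph{pointwise} bound: for $p,q\in[0,1]$ with $|p-q|=d\le 1/2$, I claim $|\phi(p)-\phi(q)|\le\phi(d)=d\log\frac{1}{d}$. I would prove this using only concavity of $\phi$. Taking WLOG $p=q+d$ and defining $g(q)=\phi(q+d)-\phi(q)$ on $[0,1-d]$, concavity makes $\phi'$ decreasing, so $g'(q)=\phi'(q+d)-\phi'(q)<0$ and $g$ is monotone decreasing; hence $g$ ranges between $g(1-d)=-\phi(1-d)$ and $g(0)=\phi(d)$, giving $|\phi(p)-\phi(q)|\le\max(\phi(d),\phi(1-d))$. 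It then remains to check $\phi(1-d)\le\phi(d)$ for $d\le 1/2$: the function $\psi(d)=\phi(d)-\phi(1-d)$ vanishes at $d=0$ and at $d=1/2$, and $\psi'(d)=-\log(d(1-d))-2\log e$ is positive near $0$, so $\psi\ge0$ throughout $(0,1/2]$. This pointwise lemma is the step I expect to be the main obstacle, because $\phi$ is not monotone on $[0,1]$ and one must argue that the extremal case sits at the endpoint $q=0$ rather than somewhere in the interior.

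With the pointwise bound established, I would finish by summing and optimizing. Provided $\epsilon\le 1/4$, every $\Delta(x)\le\mu\le 2\epsilon\le 1/2$, so the pointwise lemma applies termwise and the triangle inequality gives $|\bH(P)-\bH(Q)|\le\sum_x\phi(\Delta(x))$. Since $\phi$ is concave, Jensen's inequality over the $n$ support points yields $\sum_x\phi(\Delta(x))\le n\,\phi(\mu/n)=\mu\log\frac{n}{\mu}$. Finally, because $t\mapsto t\log\frac{n}{t}$ is increasing for $t<n/e$ and $\mu\le 2\epsilon$, I bound $\mu\log\frac{n}{\mu}\le 2\epsilon\log\frac{n}{2\epsilon}\le 2\epsilon\log\frac{n}{\epsilon}$, which is precisely the claimed inequality with $n=|P\cup Q|$. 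The only genuine analytic content is the concavity-plus-endpoint argument of the second paragraph; the rest is Jensen and monotonicity bookkeeping.
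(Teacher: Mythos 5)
You should know up front that the paper contains no proof of this statement to compare against: it appears inside the proof of Lemma 2 (Appendix A.1), where it is introduced as ``Pinsker's Lemma,'' attributed to Theorem 4.9 of \cite{BTV12}, and then used as a black box. (The name is a misnomer; what is stated is a Fannes-type continuity bound for Shannon entropy, not Pinsker's inequality.) Your proposal is therefore necessarily a different route, namely the classical Cover--Thomas continuity-of-entropy argument: the pointwise estimate $|\phi(p)-\phi(q)|\le\phi(|p-q|)$ for $\phi(t)=-t\log t$ and $|p-q|\le 1/2$, then the triangle inequality, Jensen's inequality over the $n=|P\cup Q|$ support points to get $\mu\log\frac{n}{\mu}$ with $\mu=2\,\SD(P,Q)$, and finally monotonicity of $t\mapsto t\log\frac{n}{t}$. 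This skeleton is correct, yields exactly the stated constant, and supplies a proof where the paper gives none; your reading of $|P\cup Q|$ as the size of the union of the supports is also the intended one.

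Two concrete repairs are needed, however. First, your justification that $\psi(d)=\phi(d)-\phi(1-d)\ge 0$ on $(0,1/2]$ is a non sequitur as written: a function can vanish at $0$ and $1/2$ and have positive derivative near $0$ yet still dip below zero in the interior. The fix is one line: on $[0,1/2]$ the product $d(1-d)$ is increasing, so $\psi'(d)=-\log(d(1-d))-2\log e$ is decreasing, i.e.\ $\psi$ is concave there, and a concave function vanishing at both endpoints of an interval is nonnegative on it. Second, your restriction $\epsilon\le 1/4$ does not appear in the statement, and it is not cosmetic for your method: the pointwise bound genuinely fails when $|p-q|>1/2$ (e.g.\ $p=1$, $q=0.4$ gives $|\phi(p)-\phi(q)|\approx 0.53>\phi(0.6)\approx 0.44$). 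Fortunately both halves of the gap close easily. For $\epsilon\le 1/2$ you can drop your bound $\Delta(x)\le\mu\le 2\epsilon$ in favor of the sharper $\Delta(x)\le\SD(P,Q)\le\epsilon$, which holds because the positive deviations and the negative deviations each sum to $\SD(P,Q)$; so the termwise step works whenever $\epsilon\le 1/2$, not just $\epsilon\le 1/4$. For $1/2<\epsilon\le 1$ the claim is vacuous: $2\epsilon\log\frac{n}{\epsilon}-\log n=(2\epsilon-1)\log n+2\epsilon\log\frac{1}{\epsilon}\ge 0$, while $\bH(P)-\bH(Q)\le\bH(P)\le\log n$. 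With these two patches your argument proves the lemma for every meaningful $\epsilon$ (i.e.\ $\epsilon\le 1$); in the paper's application $\epsilon$ is a vanishing secrecy parameter, so the regime your original write-up covers is in any case the one that is used.
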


Let  the random variable of 
the adversarial view,   $V_E$,  be over the set 
 $V_E$. 
According to the definition of $\epsilon$-secrecy (Definition \ref{def_smt}), for any pair of message $m_1, m_2\in \cM$, the statistical distance between the distribution of  $V_E$ when Alice sends $m_1$, and the distribution of $V_E$ when Alice sends $m_2$, is no more than $\epsilon$. That is
\[
\begin{split}
\epsilon&\geq \max_{m_1,m_2}\SD(V_E|M=m_1, V_E|M=m_2)\\
&\geq \max_{m_1, m_2}\sum_{v\in \mathcal{V}_E}|\bPr(v|m_1)-\bPr(v|m_2)|
\end{split}
\]

Assuming distribution $\Pr(m)$ on $\cal M$, this implies, 
\begin{equation}\label{eq_smt_bound1}
\begin{split}
&\SD(V_E, V_E| M=m)\\
&=\frac{1}{2}\sum_{v\in \mathcal{V}_E}|\bPr(v|m)-\bPr(v) |\\
&=\frac{1}{2}\sum_{v\in \mathcal{V}_E}|\bPr(v|m)-\sum_{m'}\bPr(v|m')\bPr(m') |\\
&=\frac{1}{2}\sum_{v\in \mathcal{V}_E}|\sum_{m'}\bPr(m')(\bPr(v|m)-\bPr(v|m')) |\\
&\leq \frac{1}{2}\sum_{v\in \mathcal{V}_E}\sum_{m'}\bPr(m')|\bPr(v|m)-\bPr(v|m')|\\
&=\sum_{m'}\bPr(m')\frac{1}{2}\sum_{v\in \mathcal{V}_E}|\bPr(v|m)-\bPr(v|m')|\\
&\leq \sum_{m'}\bPr(m')\max_{m_1,m_2}\SD(V_E|M=m_1, V_E|M=m_2) \\
&\leq \epsilon
\end{split}
\end{equation}

From  Pinsker Lemma  and Eq. (\ref{eq_smt_bound1}), we have,  
\[
\begin{split}
&\bH(V_E)-\bH(V_E|M=m)\leq 2\epsilon\cdot \log(\frac{|{\cal V}_E|}{\epsilon})
\end{split}
\]
From $|{\cal V}_E|\leq 2^{n}\times |\Sigma|^{N}$, it implies, 
\[
\bH(V_E)-\bH(V_E|M=m)\leq 2\epsilon\cdot \log(\frac{|\Sigma|^{N}}{\epsilon})+2\epsilon n
\]

So the difference between $\bH(M)$ and $\bH(M|V_E)$ is 
\begin{equation}
\begin{split}
\bH(M)-\bH(M|V_E)&=\bH(V_E)-\bH(V_E|M)\\
&=\bH(V_E)-\sum_{m\in \cM}\bPr(m)\bH(V_E|m)\\
&=\sum_{m\in \cM}\bPr(m) (\bH(V_E)-\bH(V_E|m))\\
&\leq 2\epsilon N\cdot \log(\frac{|\Sigma|}{\epsilon})+2\epsilon n
\end{split}
\end{equation}
\end{proof}

\subsection{Proof of Lemma  \ref{le_smt_bound2}}\label{ap_smt_bound2}

\begin{proof}
Let $\delta'=\bH(\delta)+\delta \log |\cM|$. The proof has two steps. 

\begin{enumerate}
\item 
We show that $\bH(M|C^{{\ell_c},a}Y^{{\ell_c},w} C^{{\ell_c},d} D^{\ell_d})\leq \delta'$. 

Let $\delta = \Pr(M_\cR \neq M_\cS)$. From Fano's inequality,
\[
\begin{split}
&\bH(\delta ) + \delta \log |\cM| \geq \bH(M_\cS|M_\cR) \geq \bH(M_\cS | Y^{\ell_c} D^{\ell_d})
\end{split}
\]

Here $\{y^{\ell_c}, d^{\ell_d}\}$, is the received vectors of Bob. 
Since $y^{{\ell_c}}=\{c^{{\ell_c},a},y^{{\ell_c},w},c^{{\ell_c},d}\}$, we have,
\begin{equation}\label{eq_dd_3}
\begin{split}
\bH(M_\cS | C^{{\ell_c},a} Y^{{\ell_c},w} C^{{\ell_c}, d} D^{\ell_d})\leq \bH(M_\cS|M_\cR)\leq \delta'
\end{split}
\end{equation}

\item 
We show that 
\[
\begin{split}
&\bH(M_\cS|C^{{\ell_c},a} C^{{\ell_c},d} D^{{\ell_d}})\leq \delta'+\bI(Y^{{\ell_c},w}; C^{{\ell_c},w}|C^{{\ell_c},a} C^{{\ell_c},d} D^{{\ell_d}})
\end{split}
\]
Writing the conditional entropy in two ways, we have,
\[
\begin{split}
&\bH(M_\cS Y^{{\ell_c},w} | C^{{\ell_c},a} C^{{\ell_c},d} D^{{\ell_d}})\\
&=\bH(M_\cS|C^{{\ell_c},a} Y^{{\ell_c},w} C^{{\ell_c},d} D^{{\ell_d}})+\bH(Y^{{\ell_c},w}|C^{{\ell_c},a} C^{{\ell_c},d} D^{{\ell_d}})\\
&=\bH(M_\cS|C^{{\ell_c},a} C^{{\ell_c},d} D^{{\ell_d}})+\bH(Y^{{\ell_c},w}|C^{{\ell_c},a} C^{{\ell_c},d} D^{{\ell_d}} M_\cS)
\end{split}
\]
and so,
\begin{equation}\label{eq_dd_1}
\begin{split}
&\bH(M_\cS|C^{{\ell_c},a} C^{{\ell_c},d} D^{{\ell_d}})\\
&=\bH(M_\cS|C^{{\ell_c},a}Y^{{\ell_c},w} C^{{\ell_c},d} D^{{\ell_d}})+\bH(Y^{{\ell_c},w}|C^{{\ell_c},a} C^{{\ell_c},d} D^{{\ell_d}})-\bH(Y^{{\ell_c},w}|C^{{\ell_c},a} C^{{\ell_c},d} D^{{\ell_d}} M_\cS)\\
\end{split}
\end{equation}

Because of the Markov chain $M_\cS \rightarrow C^{{\ell_c}}D^{{\ell_d}}(=C^{{\ell_c},a}C^{{\ell_c},w}C^{{\ell_c},d} D^{{\ell_d}})
\rightarrow C^{{\ell_c},w}$,
we have
\begin{equation}\label{eq_dd_2}
\begin{split}
&\bH(Y^{{\ell_c},w}|C^{{\ell_c},a} C^{{\ell_c},d} D^{{\ell_d}} M_\cS)\geq \bH(Y^{{\ell_c},w}|C^{{\ell_c},a}C^{{\ell_c},w}C^{{\ell_c},d} D^{{\ell_d}})
\end{split}
\end{equation}

From (\ref{eq_dd_3}) (\ref{eq_dd_1}) and (\ref{eq_dd_2}), we have,
\begin{equation}\label{eq_ap_upbound2}
\begin{split}
&\bH(M_\cS|C^{{\ell_c},a} C^{{\ell_c},d} D^{{\ell_d}})\\
&=\bH(M_\cS|C^{{\ell_c},a}Y^{{\ell_c},w} C^{{\ell_c},d} D^{{\ell_d}})+\bH(Y^{{\ell_c},w}|C^{{\ell_c},a} C^{{\ell_c},d} D^{{\ell_d}})-\bH(Y^{{\ell_c},w}|C^{{\ell_c},a} C^{{\ell_c},d} D^{{\ell_d}} M_\cS)\\
&\leq \delta'+\bH(Y^{{\ell_c},w}|C^{{\ell_c},a} C^{{\ell_c},d} D^{{\ell_d}})-\bH(Y^{{\ell_c},w}|C^{{\ell_c},a}C^{{\ell_c}, w} C^{\ell,d} D^{{\ell_d}})\\
&\leq \delta'+\bI(Y^{{\ell_c},w}; C^{{\ell_c},w}|C^{{\ell_c},a} C^{{\ell_c},d} D^{{\ell_d}})
\end{split}
\end{equation}

Note that $Y^{{\ell_c},w}=C^{{\ell_c},w}+E^{{\ell_c},w}$ where $E^{{\ell_c},w}$ is a uniformly distributed variable, and 
so \begin{equation}\label{eq_ap_upbound3}
\bI(Y^{{\ell_c},w}; C^{{\ell_c},w}|C^{{\ell_c},a} C^{{\ell_c},d} D^{{\ell_d}})=0
\end{equation}

This means that,
$$\bH(M_\cS|C^{{\ell_c},a} C^{{\ell_c},d} D^{{\ell_d}}) \leq \delta'$$

\end{enumerate}
\end{proof}

\subsection{Proof of Lemma \ref{the_rbound1}}\label{ap_the_rbound1}

\begin{proof}
We only show that it is impossible to have a two message round $(0, \delta)$-\awtppd protocol 
of form with rate higher than  $1 -\rho_r-\rho_w$:

\begin{enumerate}
\item Rnd 1: Alice $\overset{\mathsf{AWTP}}{\longrightarrow}$ Bob

\item Rnd 2: Alice $\overset{\mathsf{PD}}{\longrightarrow}$ Bob
\end{enumerate}

The impossible result to have a two message round  $(0, \delta)$-\awtppd protocol 
of form: Rnd 1, Alice $\overset{\mathsf{AWTP}}{\longrightarrow}$ Bob; Rnd 2, Alice $\overset{\mathsf{AWTP}}{\longrightarrow}$ Bob, with rate higher than  $1 -\rho_r-\rho_w$, can be proved similarly. 

We only consider the case that $\rho_r=1-\rho_w$. The case that $\rho_r>1-\rho_w$ can be proved  similarly. 

We consider  a pair of adversaries, $\{\mathsf{Adv}_2, \hat{\mathsf{Adv}}_2\} $, both with the following properties:
\remove{
a group of adversary $\cA_2$. Each adversary in $\cA_2$ explores a specific adversarial strategy to against the protocol. 
}
\begin{enumerate}
\item Adversary selects the reading and writing sets 
  before the start of the \awtppd protocol.
\item 
 Adversary also chooses the error $e^w$ 
 randomly and uniformly from $\Sigma^{\rho_wN}$. That is $\bPr(e^w)=\frac{1}{|\Sigma^{\rho_wN}|}$.
\end{enumerate}

Adversary  $\mathsf{Adv}_2$ uses the read and write sets, $S^{r}=\{S^{a}, S^{b}\}$ and $S^{ w}=\{S^b, S^c\}$.

Because of $\rho_r=1-\rho_w$, we have $[N]=S^aS^bS^cS^d$ and  $|S^{b}|= |S^{d}|$
 
Adversary  $\hat{\mathsf{Adv}_2 }$ uses the read and write sets, $\hat{S}^{ r}=\{{S}^{ a}, {S}^{d}\}$,
 and  $\hat{S}^{w}=\{{S}^{c}, {S}^{d}\}$.

We have the following:
\begin{itemize}
\item 
Since the reading and writing capabilities  of adversary $\mathsf{Adv}_2$ is same as the adversary $\mathsf{Adv}_1$  in Section \ref{sec_upperbound}, using Lemma \ref{le_smt_bound2} we have, 
\begin{equation}\label{eq_sw1}
\bH(M|C^{a}C^{d}D)\leq \bH(\delta)+\delta(\bH(M)-1)
\end{equation}

\item Since the reading capability of  $\hat{\mathsf{Adv}}_2$ is the same as $\mathsf{Adv}_1$  in Section \ref{sec_upperbound}, from Lemma \ref{le_smt_bound1}, we have, 
\begin{equation}\label{eq_sw2}
\bI({M}; {C}^{a} {C}^{d} {D})= 0
\end{equation}

\item From (\ref{eq_sw1}) (\ref{eq_sw2}), we obtain, 
\[
\begin{split}
\bH(\delta)+\delta\bH(M)&\geq \bH(M|C^{\ell,a}C^{\ell,d}D^{\ell})\geq \bH(M)
\end{split}
\]
and so, 
\[
\frac{\bH(\delta)}{1-\delta}\geq \bH(M)
\]
Since $0\leq \delta<\frac{1}{2}$ and the message is uniformly distributed, we have,
\[
1-2\bH(\delta)\leq 2^{-2\bH(\delta)}\leq 2^{-\bH(M)}=\frac{1}{|\cM|}
\]
and,
$2\bH(\delta)\geq 1-\frac{1}{|\cM|}$.
\end{itemize}

\end{proof}


\section{Proof of Section \ref{sec_construction}}

\subsection{Proof of Lemma \ref{le_prosec}}\label{ap_le_prosec}

\begin{proof}
First, assume the adversary reads the last $\rho_r N$ components of $c$, and the first $(1-\rho)N$ components
 is the set of components that  is neither read,  nor written to, by the adversary. Let $v_E'=\{{\bf r}_{(1-\rho_r)N+1}\cdots {\bf r}_{N}, \beta_{(1-\rho_r)N+1}\cdots \beta_N, {\alpha}_1\cdots \alpha_N,  t_1\cdots t_N, v_0\cdots v_N\}$ denote the view of the adversary,  except for $\bf c$. 

If $\ell \leq (u-1)(1-\rho)N$, the vector of random variables, $({\bf r}_{i_1}||\cdots||{\bf r}_{i_s})$, corresponds to a symbol-fixing source. The components that the adversary do not read are uniformly distributed and are independent from the adversary's view $v_E'$, and the components  that the adversary reads are  determined and fixed. So the randomness $\bf k$ that is  generated from the extractor, is uniformly distributed and is independent of the adversarial view. That is,
\begin{equation}\label{eq_sssi_1}
\bPr({\bf k} | v_E')=\bPr({\bf k})
\end{equation}

Second, since Alice selects the message ${\bf m}\in \cM$  independent from ${\bf k}$ and $v_E'$,  we have 
$\bPr({\bf m} | {\bf k}, v_E')=\bPr({\bf m})$. 
For any message ${\bf m}\in \cM$, we have,
$$\bPr({\bf m})\leq \bPr({\bf m} | v_E')\leq \bPr({\bf m} | {\bf k},v_E')=\bPr({\bf m})$$
This implies,
\begin{equation}\label{eq_sssi_3}
\bPr({\bf m})= \bPr({\bf m}|v_E')= \bPr({\bf m} | {\bf k}, v_E')
\end{equation}
and so we have,
\begin{equation}\label{eq_sssi_2}
\begin{split}
\bPr({\bf k}|{\bf m}, v_E')&=\frac{\bPr({\bf k}, {\bf m}, v_E')}{\bPr({\bf m}, v_E')}\\
&=\frac{\bPr({\bf m} |{\bf k},  v_E')\bPr({\bf k}, v_E')}{\bPr({\bf m} | v_E')\bPr(v_3E')}\\
&=\bPr({\bf k} | v_E')
\end{split}
\end{equation}

Third, the adversarial view for any ${\bf m}\in \cM$ is $v_E=\{{\bf c}, v_E'\}$, and so,
\[
\begin{split}
\bPr(v_E|{\bf m})&=\bPr({\bf c} ,v'_E|{\bf m})\\
&=\bPr({\bf c}|{\bf m}, v'_E)\bPr(v'_E|{\bf m})\\
&\overset{(1)}{=}\bPr({\bf k}|{\bf m},v'_E)\bPr(v'_E)\\
&\overset{(2)}{=}\bPr({\bf k})\bPr(v'_E)\\
\end{split}
\]
where, (1) is from $c_i=k_i+m_i\mod q$ for $i=1\cdots \ell$, and 
(2) is from (\ref{eq_sssi_1}) and (\ref{eq_sssi_2}).

This means the statistical distance between adversarial views of any two messages ${\bf m}_1, {\bf m}_2\in \cM$, 
 is zero and the \awtppd protocol is perfectly secure. That is,
\[
\begin{split}
&\SD(\mathsf{View}_E|{\bf m}_1, \mathsf{View}_E|{\bf m}_2)=\sum_{v_E\in \mathsf{View}_E}|\bPr(v_E|{\bf m}_1)-\bPr(v_E|{\bf m}_2)|=0
\end{split}
\]
\end{proof}

\subsection{Proof of Lemma \ref{le_prorel}}\label{ap_le_prorel}

\begin{proof}
First, we show the probability that vector $({\bf r}_{i_1}, \cdots, {\bf r}_{i_s})\neq ({\bf r}'_{i_1}, \cdots, {\bf r}'_{i_s})$ is no more than $\frac{uN}{q}$.
This is from,  
\begin{equation}\label{eq_sads_1}
\begin{split}
&\bPr(({\bf r}_{i_1}, \cdots, {\bf r}_{i_s})\neq ({\bf r}'_{i_1}, \cdots, {\bf r}'_{i_s}))\\
&\leq \sum_{i=1}^N\bPr({\bf r}_i\neq {\bf r}'_i)\\
&= \sum_{i=1}^N\bPr({{\bf r}}_i\neq {{\bf r}}'_i, v_i=1)\\
&\leq \sum_{i=1}^N\bPr({{\bf r}}_i\neq {{\bf r}}'_i,[\mathsf{hash}_{\alpha_i}({\bf r}_i)-\mathsf{hash}_{\alpha_i}({\bf r}'_i)]=[\beta_i'-\beta_i])\\
&\leq \frac{uN}{q}
\end{split}
\end{equation}
Second, for the two random vectors ${\bf k}=\mathsf{Ext}({\bf r}_{i_1}, \cdots, {\bf r}_{i_s})$ and ${\bf k}'=\mathsf{Ext}({\bf r}'_{i_1}, \cdots, {\bf r}'_{i_s})$, we have, 
\begin{equation}\label{eq_sads_2}
\bPr({\bf k}\neq {\bf k}')\leq \bPr(({\bf r}_{i_1}, \cdots, {\bf r}_{i_s})\neq ({\bf r}'_{i_1}, \cdots, {\bf r}'_{i_s}))
\end{equation}

Third, Bob correctly  receives 
$d_2=\{{\bf c},{\bf v}\}$ sent by Alice and so, $m_i+k_i=m_i'+k_i'\mod q$ for $i=1\cdots \ell$. 
That is, the probability that the message ${\bf m}\neq {\bf m}'$, is the same as the probability 
 ${\bf k}\neq {\bf k}'$.  That is, 
\begin{equation}\label{eq_sads_3}
\bPr({\bf m}\neq {\bf m}')= \bPr({\bf k}\neq {\bf k}')
\end{equation}
From (\ref{eq_sads_1}) (\ref{eq_sads_2}) (\ref{eq_sads_3}), there is $\bPr({\bf m}\neq {\bf m}')= \bPr({\bf k}\neq {\bf k}')\leq \frac{uN}{q}$. 
\end{proof}


\section{Proof of Section \ref{sec_smtawtp}}

\subsection{Proof of Lemma \ref{the_relation1}}\label{ap_the_relation1} 

\begin{proof}
First, we show that there is a one-to-one correspondence between \smtone  protocols and 
 restricted $(\epsilon,\delta)$-\awtppd protocols,  in the sense that  given one of the former, a corresponding 
one in the latter can be constructed, and vice versa, and (ii) given one of the that the security and reliability parameters of the two protocols are
 the same.
\begin{enumerate}

\item Consider a \smtone protocol, with a fixed public numbering of wires.
Recall that the in each message round of the \smtone protocol, both the wires and the \pd can be invoked by Alice,
while in our \awtppd model, only one type channel is invoked by Alice in each message round.
In  both models Bob can invoke the \pd in each message round.
We can convert the protocol messages in message round $i$ of a \smtone protocol
to the  protocol messages of message round $j$ and $j+1$, of a \awtppd protocol.
In message round $i$, transmissions over wire 1 to  $N$, defines a codeword of length 
$N$ in the $i^{th}$ message round $j$ of the  \awtp. 
The transmission over the \pd directly defines the transmission over the \pd in \awtppd,  in the $j+1$ message round.
 Each message round of the transmission over \pd, when invoked by Bob in the \smtone,  defines  a transmission over the \pd for the a\smtone protocol. 
The above transformation gives a \awtppd from a \smtone.  Similarly, a \awtppd protocol defines an \smtone protocol.
 
So a restricted $(\epsilon,\delta)$-\awtppd protocol can be constructed from \smtone protocol. Similarly, a \smtone protocol can also be constructed from restricted $(\epsilon,\delta)$-\awtppd protocol.

\item  \awtppd and \smtone definitions of secrecy and reliability are the  same.  Definition of $\epsilon$-secrecy in both primitives requires  
statistical distance of the adversary's view for two messages chosen by the adversary (Compare definition \ref{def_SMT} and definition \ref{def_smt}), to be bounded by $\epsilon$.  For $\delta$-reliability,  both primitives require the 
probability of outputting the correct message to be at least $1-\delta$, and the probability of outputting the wrong message  to be at most  $\delta$. 
 \remove{ Adversary's capabilities in the two models are also same. The corruption of restricted \awtppd protocol is by an additive error, while in 1-SMT-PD  the adversary can  arbitrarily modify the   $|S|=t$ wires. 
However for restricted \awtppd protocol since  $S_r=S_w=S$, modifying $t$ components $(c_{i_1}, \cdots c_{i_t})$ to $(c'_{i_1}, \cdots c'_{i_t})$ is equivalent to ``adding" the error $e$ with $\mathsf{SUPP}(e) =S$ and $(e_{i_1}, \cdots e_{i_t}) = ((c'_{i_1}-c_{i_1}), \cdots (c'_{i_t} -c_{i_t}))$ and so for these channels additive errors cover all possible adversarial }
\end{enumerate}

Next, we show the lower bound of transmission rate  for 
Using Theorem \ref{the_relation1}, for a \smtone over $N$ wires and $t=\rho N$, 
there is a corresponding restricted $(\epsilon, \delta)$-\awtppd protocol whose rate  is upper bounded by,
\[
R\leq 1-\rho+2\epsilon (1+\log_{|\Sigma|}\frac{1}{\epsilon})+2\epsilon n
\]
Since the transmission rate of a 1-$(\epsilon, \delta)$-SMT protocol is the inverse of the
rate of the corresponding restricted $(\epsilon, \delta)$-\awtppd protocol, we have 
\[
\begin{split}
\mathsf{TR} &= \frac{1}{R}\\
&\geq \frac{1}{1-2\rho+2\epsilon (1+\log_{|\cal W|}\frac{1}{\epsilon})+2\epsilon n}\\
&= \frac{N}{N-2t+2N\epsilon (1+\log_{|\cal W|}\frac{1}{\epsilon})+2\epsilon nN}\\
\end{split}
\]

Last, we show the lower bound on the message round of the  \smtone protocol. 
Since \smtone protocol is a special case of $(\epsilon, \delta)$-SMT-PD protocol, and it was shown that the lower bound on message round complexity for $(\epsilon, \delta)$-SMT-PD protocol is at least three, the lower bound of \smtone protocol is also three.

\end{proof}

\subsection{Detail of bounding $c_1$ and $c_2$}\label{ap_sec6}

\begin{proof}

We use the notations  in \cite{GGO10}.

From \cite{GGO10}, we have $\log |\cW_i|\geq N$, $N=\frac{K}{1-D}$, $K= \frac{k_{\min}}{n-t}+\lambda$, $k_{\min}= m$, and $\log |\cM|=m$. This gives the information rate,
\[
\frac{\log |\cM|}{\sum_{i=1}^n \log |\cW_i|}=\frac{m}{n N}= \frac{m}{n\frac{1}{1-D}(\frac{m}{n-t}+\lambda)}.
\] 
Let $\xi >0$ be a small constant. 
Choose $\lambda =\frac{n^2}{\xi}$, $D=\xi$, and $m=\frac{n^2}{\xi^2}(n-t)$. 

So the information rate is,
\[
\frac{\log |\cM|}{\sum_{i=1}^n \log |\cW_i|}=\frac{m}{\frac{n}{1-\xi}\frac{m}{n-t}(1+\xi)}\geq 1-\frac{t}{n}-2\xi 
\] 
 Let $n_0$ be an integer   that satisfies $n_0\geq \frac{1}{\xi}$ and $\frac{1}{e}\leq \sqrt[\leftroot{-3}\uproot{3}n_0^2]{\frac{1}{n_0^2}}$.
The decoding error is for $n\geq n_0$ is,
\[
\delta =t(1-D)^\lambda\leq n(1-\xi)^{\frac{n^2}{\xi}}\overset{(1)}{=}n(\frac{1}{e})^{n^2}=\frac{n}{n^2}\leq \xi,
\]
where (1) is from $(1-\xi)^{\frac{1}{\xi}}\rightarrow \frac{1}{e}$ as $\xi\rightarrow 0$. 

That is the information rate of protocol I \cite{GGO10}  approaches  $1-\frac{t}{n}-\xi$ as the number of wires $n$ approaches infinity. 
 
\vspace{3mm}

Secondly, we show the bound of $c$.

From \cite{GGO10}, we have  $\log |\cW_i|\geq N+K$, $N=2K$, $K \geq \frac{r}{n-t}$, and $\log |\cM|=r$. This implies, 
\[
\frac{\log |\cM|}{\sum_{i=1}^n \log |\cW_i|}=\frac{r}{n N}\leq \frac{r}{3n\frac{r}{n-t}}=\frac{1}{3}(1-\frac{t}{n})
\] 
So there is $c\leq \frac{1}{3}$. 

It implies the information rate of protocol II \cite{GGO10} is approximate to $c(1-\frac{t}{n})$ as the number of wires $n$ is approximate to infinity, with $c\leq \frac{1}{3}$.

\end{proof}

\end{appendix}

\end{document}